\newtheorem{theorem}{Theorem}
\newtheorem{proposition}[theorem]{Proposition}
\newtheorem{observation}[theorem]{Observation}
\newtheorem{definition}[theorem]{Definition}
\newtheorem{corollary}[theorem]{Corollary}
\newtheorem{lemma}[theorem]{Lemma}
\newcommand{\raus}[1]{}
\newcommand{\calF}{{\mathcal{F}}}
\newcommand{\calH}{{\mathcal{H}}}
\newcommand{\calP}{{\mathcal{P}}}
\newcommand{\calS}{{\mathcal{S}}}
\newcommand{\calT}{{\mathcal{T}}}
\newcommand{\var}[1]{\textsf{var}({#1})}
\newcommand{\sP}{\mathbf{\#P}}
\newcommand{\NP}{\mathbf{NP}}
\newcommand{\W}[1]{\mathbf{W[#1]}}
\newcommand{\FPT}{\mathbf{FPT}}
\newcommand{\CSPneg}{\mathrm{CSP_{neg}}}
\newcommand{\sCSP}{\mathrm{\#CSP}}
\newcommand{\sCSPneg}{\mathrm{\#CSP_{neg}}}
\newcommand{\twoSAT}{\mathrm{2\!-\!SAT}}
\newcommand{\stwoSAT}{\mathrm{\#2\!-\!SAT}}
\newcommand{\hornSAT}{\mathrm{Horn\!-\!SAT}}
\newcommand{\sSAT}{\mathrm{\#SAT}} 
\newcommand{\SAT}{\mathrm{SAT}} 
\begin{document}

\title{Hypergraph Acyclicity and Propositional Model Counting}
\author{Florent Capelli\thanks{ 
IMJ UMR 7586  -  Logique,
Université Paris Diderot,
France, Email: \texttt{fcapelli@math.univ-paris-diderot.fr}}
\and
Arnaud Durand\thanks{
IMJ UMR 7586  -  Logique,
Université Paris Diderot and LSV UMR 8643, ENS Cachan,
France, Email: \texttt{durand@math.univ-paris-diderot.fr}}
\and
Stefan Mengel\thanks{
 Laboratoire d'Informatique,
LIX UMR 7161,
Ecole Polytechnique,
France, Email: \texttt{mengel@lix.polytechnique.fr}. Partially supported by DFG grants BU 1371/2-2 and BU 1371/3-1.}
}
\maketitle

\begin{abstract}
We show that the propositional model counting problem $\sSAT$ for CNF-formulas with hypergraphs that allow a disjoint branches decomposition can be solved in polynomial time. We show that this class of hypergraphs is incomparable to hypergraphs of bounded incidence cliquewidth which were the biggest class of hypergraphs for which $\sSAT$ was known to be solvable in polynomial time so far. Furthermore, we present a polynomial time algorithm that computes a disjoint branches decomposition of a given hypergraph if it exists and rejects otherwise. Finally, we show that some slight extensions of the class of hypergraphs with disjoint branches decompositions lead to intractable $\sSAT$, leaving open how to generalize the counting result of this paper.
\end{abstract}

\section{Introduction}

Proposition model counting $(\sSAT)$ is the problem of counting satisfying assignments (models) to a CNF-formula. It is the canonical $\sP$-hard counting problem and is important due to its applications in Artificial Intelligence. Unfortunately, $\sSAT$ is extremely hard to solve: Even on restricted classes of formulas like monotome 2CNF-formulas or Horn 2CNF-formulas it is $\NP$-hard to approximate within a factor of $2^{n^{1-\epsilon}}$ for any $\epsilon > 0$~\cite{Roth96}. Fortunately, this is not the end of the story: While syntactical restrictions on the types of allowed clauses do not lead to tractable counting, there is a growing body of work that successfully applies so-called \emph{structural} restrictions to $\sSAT$, see e.g.~\cite{FMR-08,Samer:2010wc,PaulusmaSS13,SlivovskyS13}. In this line of work one does not restrict the individual clauses of CNF-formulas but instead the interaction between the variables in the formula. This is done by assigning graphs or hypergraphs to formulas and then restricting the class of (hyper)graphs that are allowed for instances (see Section \ref{sct:prelim} for details). In this paper we present a new class of hypergraphs, such with disjoint branches decompositions \cite{Duris12}, for which $\sSAT$ is tractable.

Having a disjoint branches decomposition is a so-called acyclicity notion for hypergraphs. Unlike for graphs, there are several resonable ways of  defining acyclicity for hypergraphs~\cite{Fagin-83} which have been very successful in database theory. Mostly three ``degrees of acyclicity'' have been studied: $\alpha$-acyclicity, $\beta$-acyclicity and $\gamma$-acyclicity, where the $\alpha$-acyclic hypergraphs form the most general and the $\gamma$-acyclic hypergraphs the least general class. Prior to this paper it was known that $\sSAT$ for CNF-formulas with $\alpha$-acyclic hypergraphs was $\sP$-hard \cite{Samer:2010wc}, while it is tractable for $\gamma$-acyclic hypergraphs as the latter have incidence cliquewidth bounded by $3$ \cite{GottlobP01} and thus the results of~\cite{SlivovskyS13} apply. 

To understand the influence of hypergraph acyclicity on the complexity of $\sSAT$, the next natural step is thus analyzing the intermediate case of $\beta$-acyclic hypergraphs. For this class it is known that $\SAT$ is tractable \cite{OrdyniakPS13}, unlike for $\alpha$-acyclic hypergraphs. Unfortunately, the algorithm in \cite{OrdyniakPS13} is based on a resolution-like method and it is not clear whether one can obtain tractability for counting from the method used for decision. In fact, most classical decision results based on tractability of resolution (such as for  $\twoSAT$)  or unit propagation ($\hornSAT$) do not extend to counting as the respective counting problems are hard (see e.g.~\cite{Roth96}).

Unfortunately, $\sSAT$ for CNF-formulas with $\beta$-acyclic hypergraphs has turned out to be a stubborn problem whose complexity could so far not be determined despite considerable effort by us and others \cite{Slivovsky14}. A natural approach which we follow in this paper is thus trying to understand slightly more restrictive notions of acyclicity. We focus here on hypergraphs with disjoint branches decompositions, a notion which was introduced by Duris~\cite{Duris12} and which lies strictly between $\beta$-acyclicity and $\gamma$-acyclicity. We show that for CNF-formulas whose hypergraphs have a disjoint branches decompositions we can solve $\sSAT$ in polynomial time. We also show that hypergraphs with disjoint branches decompositions are incomparable to hypergraphs with bounded incidence cliquewidth which so far were the biggest class of hypergraphs for which $\sSAT$ was known to be tractable. 
 Thus our results give a new class of tractable instances for $\sSAT$, pushing back the known tractability frontier for this problem.

Our main contribution is twofold: Most importantly, we present the promised counting algorithm for CNF-formulas whose hypergraphs have a disjoint branches decomposition in Section \ref{sct:counting}. Secondly, we present in Section \ref{sct:computedecomps} a polynomial time algorithm that checks if a hypergraph has a disjoint branches decomposition and if so also constructs it. On the one hand, this gives some confidence that hypergraphs with disjoint branches decompositions form a well-behaved class as it can be decided in polynomial time. On the other hand, the counting algorithm will depend on knowing a decomposition, so its computation is an essential part of the counting procedure.
Finally, in Section \ref{sct:extensions} we then turn to generalizing the results of this paper, unfortunately showing only negative results. We consider some natural looking extensions of hypergraphs with disjoint branches and show that $\sSAT$ is intractable on these classes under standard complexity theoretic assumptions.

\section{Preliminaries and notation}\label{sct:prelim}

\subsection{Hypergraphs and graphs associated to CNF-formulas}\label{sct:hypergraphintro}

In this section we describe graphs and hypergraphs commonly associated to CNF-formulas and introduce restricted classes of hypergraphs that we will consider in this paper. The \emph{primal graph} of a CNF-formula $F$ has as vertices the variables of $F$ and two vertices are connected by an edge if they appear in a common clause of $F$. The \emph{incidence graph} of $F$ is defined as the bipartite graph which has as vertices the variables and the clauses of $F$ and two vertices $u$ and $v$ are connected by an edge if $u$ is a variable and $v$ is a clause such that $u$ appears in $v$. The \emph{signed incidence graph} of such a formula is obtained from its incidence graph by orientating edges to indicate positive or negative occurences of variables in clauses (see~\cite{FMR-08} for details). 

A (finite) hypergraph  $\calH$ is a pair $(V,E)$ where $V$ is a finite set and $E\subseteq \calP(V)$. A subhypergraph $\calH'=(V',E')$ of $\calH=(V,E)$ is a hypergraph with $V'\subseteq V$ and $E'\subseteq \{e \cap V' \mid e \in E, e \cap V' \ne \emptyset\}$. A path between two vertices $u,v\in V$ is defined to be a sequence $e_1, \ldots, e_k$ such that $u\in e_1$, $v\in e_k$ and for every $i=1,\ldots , k-1$ we have $e_i\cap e_{i+1}\ne \emptyset$. A hypergraph $\calH$ is called \emph{connected} if there is a path between every pair of vertices of $\calH$. A (connected) \emph{component} of $\calH$ is defined to be a maximal connected subhypergraph of $\calH$.

To a CNF-formula $F$ we associate a hypergraph $\calH=(V,E)$ where $V$ is the variable set of  $F$ and the hyperedge set $E$ contains for each clause of $F$ an edge containing the variables of the clause.


\subsubsection{Graph Decompositions}

We will not recall basic graph decompositions such as tree-width and clique-width (see e.g.~\cite{GottlobP01,FMR-08}). A class of CNF-formulas is defined to be of bounded  (signed) incidence clique-width, if  their (signed) incidence graphs are of bounded clique-width. A set $X\subseteq V$ of vertices of a graph is called a \emph{module}, if every $v\in V\backslash X$ has the same set of neighbours and non-neighbours in $X$. Intuitively, the elements of a module $X$ are indiscernible by vertices outside of $X$.  If $X$ is a module of a graph $G$, the graph $G'=(V',E')$ obtained after contraction of $X$ is defined by $V':=(V\backslash X)\cup \{x\}$ where $x$ is a new vertex not in $V$ and $E':=(E\cap (V\backslash X)^2)  \cup \{ux: u\not\in X \mbox{  and } \exists  v\in X \mbox{ s.t. } uv\in E\})$. A class of CNF-formulas is of bounded \emph{modular incidence treewidth} if their incidence graphs are of bounded tree-width after contracting all modules.

\subsubsection{Acyclicity in hypergraphs}

It is well-known that, in contrast to the graph setting, there are several non equivalent notions of acyclicity for hypergraphs~\cite{Fagin-83}. Most of these notions have many equivalent definitions (see~\cite{Fagin-83,Duris12} for elimination rule based or cycle based definitions, for example), but we will mainly restrict ourselves to definitions using the notion of join trees. 

\begin{definition}\label{def:jointree} A \emph{join tree} of a hypergraph $\calH=(V,E)$ is a pair $(\calT,\lambda)$ where $\calT=(N,T)$ is a tree and $\lambda$ is a bijection between $N$ and $E$ such that:
\begin{itemize}
\item for each $e\in E$, there is a $t\in N$ such that $\lambda(t)=e$, and
\item for each $v\in V$, the set $\{t\in N\mid v \in \lambda(t)\}$ is a connected subtree of $\calT$.
\end{itemize}
\end{definition}

The second condition in Definition \ref{def:jointree} is often called the \emph{connectedness condition}. It is often convenient to identify an edge $e\in E$ with the vertex $\lambda(e)\in N$ and we will mostly follow this convention in this paper. We call a join tree $(\calT, \lambda)$ a \emph{join path} if the underlying tree~$\calT$ is a path.

A hypergraph is defined to be $\alpha$-\emph{acyclic} if it has a join tree~\cite{Fagin-83}.  This is the most general acyclicity notion for hypergraphs commonly considered. However, $\alpha$-acyclicity is  not closed under taking subhypergraphs: an $\alpha$-acyclic hypergraph may have cyclic subhypergraphs. To remedy this situation, one considers the restricted notion of $\beta$-\emph{acyclicity} where a hypergraph is defined to be $\beta$-acyclic if it is $\alpha$-acyclic and all of its subhypergraphs are also all $\alpha$-acyclic. 

A $\gamma$-cycle in a hypergraph is a sequence $(e_1,x_1,...,e_n,x_n)$ with  $n\geq 3$  where the $x_i$ are distinct vertices and the $e_i$ are distinct hyperedges such that, 
\begin{itemize}
\item for all $i\in [1,...,n-1]$, $x_i$ belongs to $e_i$ and $e_{i+1}$ and to no other $e_j$ for  $j\neq i,i+1$.
\item $x_n$ belongs to $e_n$ and $e_1$ and to possibly to other $e_j$s.
\end{itemize}

A hypergraph is $\gamma$-\emph{acyclic} if it has no $\gamma$-cycle. This notion can also be characterized and generalized through the notion of disjoint branches decompositions.

\begin{definition}
A \emph{disjoint branches decomposition} of a hypergraph $\calH$ is a join tree $(\calT,\lambda)$ such that for every two nodes $t$ and $t'$ appearing on different branches of $\calT$ we have $\lambda(t) \cap \lambda(t') = \emptyset$.   
 \end{definition}

Disjoint branches decompositions were introduced by Duris~\cite{Duris12} who proved that a hypergraph is $\gamma$-acyclic if and only if it has a disjoint branches decomposition for any choice of hyperedge as a root. Furthermore, he showed that every hypergraph with a disjoint branches decomposition is $\beta$-acyclic.

\subsection{Known complexity results and comparisons between classes}

We show the known complexity results for the restrictions of $\sSAT$ we have introduced before in Table~\ref{table:results}; for definitions of the appearing complexity classes see e.g.~\cite{FlumG06}.

\begin{table}\label{table:results}
\begin{center}
\begin{tabular}{l|l|l}
class & lower bound & upper bound\\
\hline
primal treewidth &  & $\FPT$~\cite{Samer:2010wc}\\
incidence treewidth &  & $\FPT$~\cite{Samer:2010wc}\\
modular incidence treewidth &  & $\FPT$~\cite{PaulusmaSS13}\\
signed incidence cliquewidth &  & $\FPT$~\cite{FMR-08}\\
incidence cliquewidth & $\W{1}$-hard~\cite{OrdyniakPS13} & $\mathbf{XP}$~\cite{SlivovskyS13}\\
$\gamma$-acyclic & & $\mathbf{FP}$~\cite{GottlobP01,SlivovskyS13}\\
$\beta$-acyclic & ? & ? \\
$\alpha$-acyclic & $\sP$-hard~\cite{Samer:2010wc} & $\sP$\\
disjoint branches & & $\mathbf{FP}$ (this paper)
\end{tabular}
\end{center}
\caption{Known complexity results for structural restrictions of $\sSAT$.}
\end{table}

The four acyclicity notions and classes defined by bounding the introduced width measures form a hierarchy for inclusion which is depicted in Figure~\ref{FIG:hierarchy}. Most of the proofs of inclusion can be found in \cite{Fagin-83,Duris12,GottlobP01,PaulusmaSS13} and the references therein. We give the missing results in this sections. 


\paragraph*{$\gamma$-acyclicity and modular treewidth are incomparable.}
 We exhibit a family of $\gamma$-acyclic hypergraph whose associated incidence graph have unbounded modular treewdith. 
Let $n\in \mathbb{N}$ and $\calH_n$ be the hypergraph of vertex set $\{x_1,...,x_n,y_1,...,y_n\}$ and of hyperedge set
\[\{\{y_i,x_1,...,x_n\} \mid i\leq n\} \cup \{\{x_i\}\mid i\leq n\}.\]

Clearly, $\calH_n$ is $\gamma$-acyclic. Also, the incidence graph of $\calH_n$ has no modules because of the $y_i$ and the singleton edges $\{x_i\}$. Thus the treewidth of $\calH_n$ and its modular treewidth coincide. Furthermore, it is easy to see that the incidence graph of $\calH_n$ contains a subgraph that is ismorphic to $K_{n,n,}$. Since treewidth is stable under taking subgraphs and $K_{n,n}$ is well-known to have treewidth $n$, it follows that the hypergraphs $\calH_n$ have unbounded modular treewidth.

For the other direction, cycles have bounded modular treewidth but are not $\gamma$-acyclic.

 \paragraph*{Disjoint branches and incidence clique-width are incomparable.}
In this section we will show that unlike $\gamma$-acyclic hypergraphs the hypergraphs with disjoint branches decompositions have unbounded cliquewidth. In fact we will even show this for hypergraphs with join paths. Since join paths do not branch, these hypergraphs are a subclass of the hypergraphs with disjoint branches decompositions.
 
We will use the following characterization of hypergraphs with join paths.

 \begin{lemma}
 \label{lemma:pathorder}
 A hypergraph $\calH = (V,E)$ has a join path if and only
 if there exists an order~$<_E$ on the edge set $E$ of $\calH$ such that for all
 $e,f,g \in E$ such that $e <_E f <_E g$, if $v \in e \cap g$ then $v
 \in f$.
 \end{lemma}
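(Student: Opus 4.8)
The plan is to recognize that the stated condition is nothing more than a reformulation of the connectedness condition of Definition~\ref{def:jointree} specialized to the case where the underlying tree of the join tree is a path. The single crucial observation is that a connected subtree of a path is exactly a contiguous block of consecutive nodes, i.e.\ an interval with respect to the linear order induced by the path. Once this is in place, both directions reduce to translating between the statement ``the set of edges containing a fixed vertex $v$ is an interval'' and ``the betweenness condition holds for $v$''.

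For the forward direction, I would start from a join path $(\calT,\lambda)$ and list the nodes of $\calT$ along the path as $t_1 - t_2 - \cdots - t_m$. Since $\lambda$ is a bijection, setting $\lambda(t_i) <_E \lambda(t_j)$ whenever $i<j$ yields a total order on $E$. Given $e <_E f <_E g$ corresponding to indices $i<j<k$, if $v \in e \cap g$ then both $t_i$ and $t_k$ lie in the set $\{t \mid v \in \lambda(t)\}$; by the connectedness condition this set is a connected subtree of the path, hence contains the whole subpath from $t_i$ to $t_k$ and in particular $t_j$, so $v \in f$. This is exactly the required property.

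For the converse, I would enumerate $E = \{e_1 <_E \cdots <_E e_m\}$ according to the given order, build the path $\calT$ on nodes $t_1, \ldots, t_m$ with edges $t_i t_{i+1}$, and set $\lambda(t_i) = e_i$; by construction $\lambda$ is a bijection. To verify that $(\calT,\lambda)$ is a join path it only remains to check the connectedness condition. Fixing $v$, I would consider the set $S_v = \{e \in E \mid v \in e\}$ and take its $<_E$-minimum $e_{\min}$ and $<_E$-maximum $e_{\max}$. For any $f$ with $e_{\min} <_E f <_E e_{\max}$, the betweenness condition applied to $e_{\min}, f, e_{\max}$ forces $v \in f$; hence $S_v$ is precisely the interval between $e_{\min}$ and $e_{\max}$, which corresponds to a contiguous set of nodes and is therefore connected in $\calT$.

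I do not expect a genuine obstacle here, as the statement is essentially a repackaging of definitions. The only points requiring a little care are reading $<_E$ as a \emph{total} order and treating $\lambda$ as a bijection (so that distinct nodes carry distinct edges and the path really has $|E|$ nodes), together with cleanly dispatching the degenerate cases where $S_v$ is empty or a singleton, both of which are trivially intervals.
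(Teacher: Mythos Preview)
Your proposal is correct and follows essentially the same approach as the paper's own proof: both directions translate directly between the path-induced linear order and the connectedness condition, with the key observation that connected subtrees of a path are intervals. The paper's argument is slightly terser (it does not explicitly take $<_E$-min and $<_E$-max of $S_v$ but argues the interval property directly), but there is no substantive difference.
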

 \begin{proof}
 If $\calP$ is a join path
  of $\calH$, we choose an orientation of this path
 and then define $e <_E f$ if and only if $e$ appears before $f$ in
 $\calP$. If $e <_E f <_E g$ and $v \in e \cap g$, then as $f$ is
 between $e$ and $g$ in $\calP$. From the connectedness condition of $v$, we get $v \in
 f$.
 
 For the other direction, let $<_E$ be an order on $E$. Let $E:=\{e_1,\ldots,e_n\}$ with $e_i <_E e_{i+1}$ for $i <
 n-1$. Let $\calP$ be the path whose vertices are $E$ and edges are
 $(e_i,e_{i+1})$ for $i < n$. We claim that $\calP$ is a join path of
 $\calH$. Obviously $\calP$ is a path, so we only have to show the connectedness property.
Let $v \in V$, then for all $i \leq k \leq j$, if $v \in e_i \cap
   e_j$, then $v \in e_k$ by assumption on $<_E$. So the edges containing $v$ are connected
   in $\calP$ which proves the claim.
 \end{proof}
 
\begin{definition}
Let $G = (X,Y,E)$ be a bipartite graph. A {\em strong ordering} $(<_X, <_Y)$ of $G$ is a pair of orderings on $X$ and $Y$ such that for all $x,x' \in X$ and $y,y' \in Y$, such that $x <_X x'$ and $y <_Y y'$, if $(x,y) \in E$ and $(x',y') \in E$, then $(x,y') \in E$ and $(x',y) \in E$. $G$ is called a {\em bipartite permutation graph} if it admits a strong ordering.
\end{definition}

Brandstädt and Lozin showed the following property of bipartite permutation graphs.

\begin{lemma}[\cite{brandloz}] \label{lem:brandstaedt}Bipartite permutation graphs have unbounded cliquewidth.
 \end{lemma}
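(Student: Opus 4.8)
The plan is to prove this ``from scratch'' by exhibiting a single family $(G_k)_{k\geq 1}$ of bipartite permutation graphs whose clique-width tends to infinity. This suffices because clique-width is monotone under taking induced subgraphs, and the class of bipartite permutation graphs is hereditary (deleting vertices preserves a strong ordering by restriction). The conceptual point to keep in mind is that the statement is mildly counterintuitive: bipartite permutation graphs carry a strictly ``linear'' staircase structure, yet this structure does not bound clique-width, so the chosen family must be designed to be genuinely two-dimensional rather than a simple staircase.

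First I would fix the family. A convenient choice is a bipartite graph $G_k$ on parts $X$ and $Y$ whose biadjacency matrix is a carefully chosen monotone staircase encoding a $\sqrt{m}\times\sqrt{m}$ grid-like pattern (so that $G_k$ has roughly $m\approx k^2$ vertices on each side). Verifying that each $G_k$ is a bipartite permutation graph is the easy part: one directly checks that the appropriate orderings $(<_X,<_Y)$ of $X$ and $Y$ satisfy the strong ordering property of the Definition, i.e.\ that whenever $x<_X x'$, $y<_Y y'$ and both $(x,y),(x',y')$ are edges, the two ``opposite corners'' $(x,y')$ and $(x',y)$ are edges as well. This is a local, finite check on the adjacency pattern and involves no global argument.

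The hard part, and the main obstacle, is the clique-width lower bound, since clique-width is notoriously resistant to direct lower-bounding. A key warning is that producing \emph{one} vertex cut of large rank does not suffice: the ``half graph'' (upper-triangular biadjacency matrix) is a bipartite permutation graph whose $X/Y$ cut matrix has full GF$(2)$ rank, yet its clique-width is bounded. One must instead show that \emph{every} clique-width expression is forced to use many labels. I would argue this by a label-counting argument in the spirit of Golumbic--Rotics: a $t$-expression partitions, at each of its subexpressions, the vertices introduced so far into at most $t$ label classes, and any edges added later between two classes are added to all their vertices uniformly. Choosing a subexpression whose already-introduced set $A$ is balanced with respect to the grid structure, I would exhibit $\Omega(k)$ vertices of $A$ whose neighbourhoods into $V(G_k)\setminus A$ are pairwise distinct and pairwise non-mergeable; such vertices must lie in pairwise distinct label classes, forcing $t=\Omega(k)$.

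The genuinely delicate step is guaranteeing that \emph{every} expression admits such a balanced, high-complexity cut, rather than merely exhibiting one good cut for a single expression; this is exactly where the two-dimensional, grid-encoding nature of the family must be exploited, and it is the crux of the proof. A cleaner modern route would replace the combinatorial label count by the inequality $\mathrm{cw}(G)\geq \mathrm{rw}(G)$ and lower-bound the rank-width directly, by showing that in every branch decomposition of $G_k$ some internal cut has a GF$(2)$ cut-matrix containing a large identity submatrix; the grid encoding is chosen precisely so that no branch decomposition can avoid such a cut.
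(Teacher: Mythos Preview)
The paper does not prove this lemma at all: it is stated with a citation to Brandst\"adt and Lozin~\cite{brandloz} and used as a black box. So there is no ``paper's own proof'' to compare your proposal against; the intended argument is simply ``see~\cite{brandloz}.''

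As for the soundness of your outline as a standalone proof, the overall strategy is the right one and matches what Brandst\"adt and Lozin actually do: exhibit an explicit family of bipartite permutation graphs and prove a clique-width lower bound growing with the index. Your diagnostic remark that a single large-rank cut is not enough (the half-graph example) is correct and important. However, as written the proposal is a plan rather than a proof, and the two load-bearing steps are both left unspecified. First, you never actually define the family $G_k$; ``a carefully chosen monotone staircase encoding a $\sqrt{m}\times\sqrt{m}$ grid-like pattern'' is a description of intent, not a construction, and the verification that the strong-ordering property holds cannot be carried out without a concrete definition. Second, the lower-bound step is only gestured at: you correctly identify that one must find, in \emph{every} $t$-expression, a balanced subexpression forcing $\Omega(k)$ distinct neighbourhood types into the complement, but you do not say how the (still undefined) grid encoding guarantees this for every decomposition. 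The rank-width alternative you mention is cleaner, but the same gap remains: you would need to show that every branch decomposition of $G_k$ has a cut whose GF$(2)$ cut-matrix has rank $\Omega(k)$, and that again requires the explicit family and a genuine argument over all decompositions.

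In short: the paper treats this as a citation, and your proposal is a reasonable roadmap toward reproving the cited result, but it is not yet a proof because neither the witnessing family nor the universal lower-bound argument is actually written down.
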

 
It turns out that hypergraphs with a bipartite permutation incidence graph admit a join path.
 
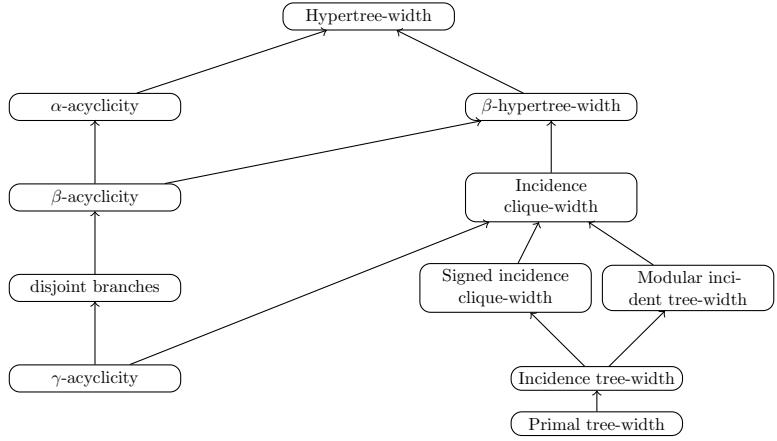
\begin{figure}
\center
\begin{tikzpicture}[scale=0.6, every node/.style={transform shape}, auto]
  \definecolor{sp}{RGB}{100,180,255}
  \definecolor{p}{RGB}{255,255,180}

  \tikzstyle{FPT} = [rectangle,draw,rounded corners,text centered,text width=3.5cm];
  \tikzstyle{XP} = [rectangle,draw,rounded corners,text centered,text width=3.5cm];
  \tikzstyle{sP} = [rectangle,draw,rounded corners,text centered,text width=3.5cm];
  \tikzstyle{P} = [rectangle,draw,rounded corners,text centered,text width=3.5cm];
  \tikzstyle{uK} = [rectangle,draw,rounded corners,text centered,text width=3.5cm];

  \node[P]  (ga)  at (0,0)  {$\gamma$-acyclicity};
  \node[P]  (db) at  (0,2)  {disjoint branches};
  \node[uK] (ba) at  (0,4) {$\beta$-acyclicity};
  \node[sP] (aa) at  (0,6) {$\alpha$-acyclicity};
 
  \node[sP] (hw) at  (6,8) {Hypertree-width};

  \node[FPT] (sicw) at (9,2) {Signed incidence clique-width};
  \node[XP] (icw) at (10,4) {Incidence clique-width};
  \node[sP] (bhw) at (10,6) {$\beta$-hypertree-width}; 
  \node[FPT] (mitw) at (13,2) {Modular incident tree-width};
  \node[FPT] (itw) at (11,0) {Incidence tree-width};
  \node[FPT] (ptw) at (11,-1) {Primal tree-width};

  \draw[->] (ga) -- (db);
  \draw[->] (db) -- (ba);
  \draw[->] (ba) -- (aa);
  \draw[->] (aa) -- (hw);
  \draw[->] (ga) -- (icw);
  \draw[->] (ptw) -- (itw);
  \draw[->] (itw) -- (sicw);
  \draw[->] (itw) -- (mitw);
  \draw[->] (mitw) -- (icw);
  \draw[->] (sicw) -- (icw);
  \draw[->] (icw) -- (bhw);
  \draw[->] (bhw) -- (hw);
  \draw[->] (ba) -- (bhw);
\end{tikzpicture}
\caption{A hierarchy of inclusion of graph and hypergraph classes. Classes not connected by a directed path are incomparable.}~\label{FIG:hierarchy}
\end{figure}

\begin{proposition}\label{prop:cliquewidthunbounded}
 Every hypergraph $\calH$  without empty hyperedges, whose incidence
 graph~$\calH_I$ is a bipartite permutation graph, has a join path.
 \end{proposition}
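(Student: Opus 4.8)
The plan is to use Lemma~\ref{lemma:pathorder}: to show that $\calH$ has a join path, it suffices to produce a linear order $<_E$ on the edge set $E$ such that whenever $e <_E f <_E g$ and $v \in e \cap g$, we have $v \in f$. Since the incidence graph $\calH_I$ is a bipartite permutation graph, it comes equipped with a strong ordering $(<_X, <_Y)$, where we take $X$ to be the clause/edge side and $Y$ to be the variable/vertex side of the bipartition. The natural idea is to simply let $<_E$ be the strong ordering $<_X$ on the edge side and then verify the path-order condition of Lemma~\ref{lemma:pathorder} directly from the strong ordering property.

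So the key step is the following verification. Suppose $e <_X f <_X g$ for edges $e,f,g$ and let $v$ be a vertex with $v \in e \cap g$; I must show $v \in f$. Translating to the incidence graph, $v \in e$ means $(e,v) \in E(\calH_I)$ and $v \in g$ means $(g,v) \in E(\calH_I)$. I would like to conclude $(f,v) \in E(\calH_I)$, i.e.\ $v \in f$. The strong ordering condition, as stated, fires on \emph{two} distinct vertices $y <_Y y'$ together with two edges $x <_X x'$; here I only have a single vertex $v$, so the condition does not apply verbatim. The main obstacle is thus bridging this gap: I need to find a second variable $y'$ to play against $v$ so that the strong-ordering rectangle condition can be invoked to force $(f,v)$ into the edge set.

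To supply the missing second vertex I would exploit the hypothesis that $\calH$ has no empty hyperedges, which guarantees that $f$ contains some variable $w$, i.e.\ $(f,w) \in E(\calH_I)$. Now I compare $v$ and $w$ under $<_Y$. If $v = w$ we are already done. Otherwise there are two cases according to whether $v <_Y w$ or $w <_Y v$. In the first case I apply the strong ordering to the pair $e <_X f$ (edges) and $v <_Y w$ (vertices), using the known edges $(e,v)$ and $(f,w)$, to deduce in particular $(f,v) \in E(\calH_I)$. In the second case $w <_Y v$, I instead apply the strong ordering to the pair $f <_X g$ and $w <_Y v$, using $(f,w)$ and $(g,v)$, again obtaining $(f,v) \in E(\calH_I)$. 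Either way $v \in f$, which is exactly what Lemma~\ref{lemma:pathorder} requires.

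Once this order condition is established, the proposition follows immediately: by Lemma~\ref{lemma:pathorder} the order $<_E \,:=\, <_X$ witnesses that $\calH$ admits a join path. The only subtlety I expect is the careful bookkeeping in the case analysis, making sure that the roles of $X$ and $Y$ in the strong ordering are assigned consistently (edges to $X$, variables to $Y$) and that the orientation of $<_Y$ is handled correctly in both subcases; these are routine once the right second vertex $w$ has been chosen. I do not anticipate any deeper difficulty, since the strong ordering is precisely a two-dimensional generalization of the one-dimensional path-order condition, and the no-empty-hyperedge assumption is exactly what makes the reduction from two witnesses to one witness go through.
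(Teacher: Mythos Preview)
Your proposal is correct and is essentially identical to the paper's proof: both take the edge-side order from the strong ordering, pick an arbitrary vertex $w$ in the middle edge $f$ (using the no-empty-hyperedge hypothesis), and then do the two-case comparison of $v$ and $w$ to fire the strong-ordering condition with either the pair $(e,f)$ or the pair $(f,g)$, concluding via Lemma~\ref{lemma:pathorder}. The only difference is cosmetic---the paper names the two sides $(V,E)$ rather than $(Y,X)$.
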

 \begin{proof}
 Let $(<_V,<_E)$ be a strong ordering of $\calH_I =
 (V,E,A)$. We claim that for all $e <_E f <_E
 g$, if $v \in e \cap g$ then $v \in f$. Indeed, as $f$ is not empty,
 there exists $w \in f$. If $w = v$, there is nothing to
 prove. Otherwise, suppose that $w <_V v$. Then, by definition of
 strong orderings, as $(f,w) \in A$ and $(g,v) \in A$, we have $(f,v)
 \in A$. Thus $v \in f$. The case $v <_V w$ follows symetrically:
 $(f,w) \in A$ and $(e,v) \in A$ implies that $(f,v) \in A$ so $v \in
 f$. Thus the ordering $<_E$ has the property of
 lemma~\ref{lemma:pathorder} and it follows that  $\calH$ has a join path.
 \end{proof}

By combining Lemma \ref{lem:brandstaedt} and Proposition \ref{prop:cliquewidthunbounded} we get:

\begin{corollary}
 The class of CNF-formulas with join paths has unbounded incidence cliquewidth. The same is thus true for CNF-formulas with disjoint branches decompositions.
\end{corollary}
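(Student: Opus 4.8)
The plan is to establish the Corollary as a direct consequence of the two results assembled immediately beforehand, namely Lemma \ref{lem:brandstaedt} (bipartite permutation graphs have unbounded cliquewidth) and Proposition \ref{prop:cliquewidthunbounded} (a hypergraph without empty hyperedges whose incidence graph is a bipartite permutation graph admits a join path). The strategy is to exhibit an explicit family of hypergraphs witnessing unboundedness: by Lemma \ref{lem:brandstaedt} there is a sequence $(G_n)_{n\in\mathbb{N}}$ of bipartite permutation graphs whose cliquewidth tends to infinity. First I would interpret each $G_n=(X_n,Y_n,E_n)$ as the incidence graph of a hypergraph $\calH_n$, reading the two sides of the bipartition as variables and clauses respectively and each edge of $E_n$ as recording membership of a variable in a clause. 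After discarding any empty hyperedges (a harmless step, since removing a clause with no variables does not change cliquewidth up to an additive constant and preserves being a bipartite permutation graph), Proposition \ref{prop:cliquewidthunbounded} applies and gives a join path for each $\calH_n$.

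The second step is to turn this family of hypergraphs into a family of CNF-formulas. Here I would simply take, for each $\calH_n$, any CNF-formula $F_n$ whose associated hypergraph is $\calH_n$ — for instance the formula obtained by making every hyperedge into a purely positive clause over its variables. The key point is that the incidence graph of $F_n$ as a CNF-formula coincides (as an abstract bipartite graph) with $G_n$, so the incidence cliquewidth of $F_n$ equals the cliquewidth of $G_n$, which is unbounded over the family. Since each $F_n$ has a join path by construction, the class of CNF-formulas with join paths has unbounded incidence cliquewidth.

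Finally I would invoke the remark made in the excerpt just before Proposition \ref{prop:cliquewidthunbounded}: because a join path is a special case of a disjoint branches decomposition (a path does not branch, so the disjointness-across-branches condition is vacuous), every formula with a join path is in particular a formula with a disjoint branches decomposition. Hence the class of CNF-formulas with disjoint branches decompositions contains the class with join paths, and unboundedness of incidence cliquewidth is inherited by the larger class, giving the second sentence of the Corollary.

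The argument is essentially a bookkeeping assembly rather than a hard proof, so I do not expect a genuine obstacle. The one point requiring a little care is the empty-hyperedge caveat in Proposition \ref{prop:cliquewidthunbounded}: I must make sure the chosen family $G_n$ from Lemma \ref{lem:brandstaedt} can be realized as incidence graphs of hypergraphs \emph{without} empty hyperedges (equivalently, that no vertex on the clause side is isolated), or else argue that deleting such isolated clause-vertices keeps cliquewidth unbounded. Since deleting a vertex changes cliquewidth by at most a bounded amount and the number of isolated vertices we delete is irrelevant to the growth of cliquewidth, this is readily handled and does not threaten the conclusion.
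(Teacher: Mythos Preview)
Your proposal is correct and follows essentially the same approach as the paper, which simply states the corollary as a direct combination of Lemma~\ref{lem:brandstaedt} and Proposition~\ref{prop:cliquewidthunbounded} without further argument. Your additional care about empty hyperedges and the explicit passage from hypergraphs to CNF-formulas fills in details the paper leaves implicit; the only minor imprecision is that deleting a vertex can only \emph{decrease} cliquewidth (it is monotone under induced subgraphs), so the empty-hyperedge issue is even more harmless than you suggest.
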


Again, cycles give an example proving  the other direction of imcomparability.

\subsection{Representation of $\sSAT$ by constraint satisfaction problems}\label{sct:representation}

It will be convenient to describe our counting algorithm in the framework of constraint satisfaction problems in negative representation \cite{CohenGH2009}. We will discuss below how this representation relates to $\sSAT$. 

Let $D$ be a finite set called domain. A constraint $C=(R, \bar{x})$ is a pair where $R\subseteq D^{r}$ is a relation and $\bar{x}=(x_{i_1}, \ldots, x_{i_r})$ is a list of variables of length $r$. 
An instance $\Phi$ of the (uniform) constraint satisfaction problem, is a set of constraints. We denote by $\var{\Phi}$ the set $X$ of variables of $\Phi$. The instance $\Phi$ is satisfied by an assignment $a:\var{\Phi} \rightarrow D$ if for all  $(R,\bar x)\in \Phi$ we have $a(\bar x)=(a(x_{i_1}), \ldots, a(x_{i_r}))$ is in the relation $R$.  We denote this by $a\models \Phi$. 
 
The associated counting problem, $\sCSP$, is, given an instance $\Phi$, to compute 
\[ |\{a \mid a\models \Phi\}|, \] 
i.e., the number of satisfying assignments of $\Phi$.

No hypothesis is made above on the arity of relations which is not a priori bounded and may differ for different relations. So, it may be more succinct to represent each relation $R$ by listing the tuples in its complement $R^c:=D^r\backslash R$. Consequently, we define the counting constraint satisfaction problem in negative representation, $\sCSPneg$, that is to compute, given  $\Phi$ where each relation $R$ is encoded by listing the tuples in $R^c$, the number of satisfying assignments of $\Phi$.
 

\paragraph{The relation to $\sSAT$:} It is natural to represent CNF-formulas by a Boolean $\CSPneg$-instance. Indeed, as we do not bound the length $k$ of clauses, it is more realistic to represent each associated constraint relation not by its set of $2^{k}-1$ models (as common in the area of constraint satisfaction) but by its complement containing the unique counter model of the clause.

In the other direction one can easily encode every $\sCSPneg$-instance by a CNF-formula (see also \cite{BraultBaron12}): In a first step encode all domain elements in binary, introducing vertex modules in the hypergraph. Then we encode every tuple in every relation by a clause  that disallows the respective tuple. Observe that after the contraction of some modules, the CNF-formula has the same hypergraph as the original $\sCSPneg$-instance. Since the class of hypergraphs with disjoint branches decompositions is stable under introducing or contracting of modules, it follows that $\sSAT$ and $\sCSPneg$ are equivalent for our considerations.

\section{Counting solutions of disjoint branches queries}\label{sct:counting}

In this section we will show that $\sCSPneg$---and thus also $\sSAT$---restricted to hypergraphs with a disjoint branches decomposition can be solved in polynomial time. It will be convenient to work with inputs of the form $\phi = R_1 \lor \ldots \lor R_k$, i.e., we will count assignments $a$ that satisfy at least one of the $R_i$. By basic Boolean algebra and inclusion-exclusion, solving instances of this type is equivalent to $\sCSPneg$. Observe that when transforming a $\sCSPneg$-instance into the disjunctive form, this changes the encoding of the relations from negative representation to positive representation, i.e., we list the tuples in the relations $R_i$ explicitly and \emph{not} those in the complement.

Note that, as discussed in Section~\ref{sct:representation}, we may assume that the domain of all relations is $\{0,1\}$, so we restrict ourselves to this case.

Let us introduce some notation:
Let $X,Y$ be two sets of variables and let $a\colon X\rightarrow \{0,1\}$ and $b\colon Y \rightarrow \{0,1\}$ be two assignments. We call $a$ and $b$ \emph{consistent}, symbol $a\sim b$, if they agree on their common variables $X\cap Y$.
If, in addition, $X\subseteq Y$, we write $a\subseteq b$.  Finally, if $a$ and $b$ have disjoint domains, i.e., $X\cap Y = \emptyset$, we denote by $a \oplus b$ the assignment on $X\cup Y$ defined in the obvious way.

For an assignment $a\colon X\mapsto \{0,1\}$ and a set of variables $Y\subseteq X$ the restriction of $a$ onto $Y$ is denoted by $a|_Y$.
 
Let $\phi$ be an instance and let $\psi$ be a subformula of $\phi$. Let $X$ a subset of the variables such that $\var{\psi} \subseteq X \subseteq \var{\phi}$ and let $a$ be a partial assignment of variables in $\var{\phi}$. We denote by $Sol_X(\psi, a) = \{ b : X \rightarrow \{0,1\}, b \models \psi, a \sim b \}$ and by $S_X(\psi,a) = |Sol_X(\psi,a)|$. The number of solutions of $\phi$ is then $S_{\var{\phi}}(\phi,\emptyset)$, where $\emptyset$ denotes the empty partial assignment. We show that by computing a polynomial number of values $S_X(\psi,a)$ in polynomial time, we can compute $S_{\var{\phi}}(\phi,\emptyset)$. To this end we will prove several lemmas that will allow us a recursive computation.
 
The first lemma shows how the disjointness naturally appears when we want to count solutions:
\begin{lemma}\label{lem:basiccount}
Let $\phi_1$ and $\phi_2$ be two instances and $X\supseteq \var{\phi_1} \cup \var{\phi_2}$, then \[S_X(\phi_1 \lor \phi_2, a) = S_X(\phi_1,a) + S_X(\phi_2,a) - |Sol_X(\phi_1,a) \cap Sol_X(\phi_2,a)|.\]
\end{lemma}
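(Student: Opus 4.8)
The plan is to establish the identity by the standard inclusion-exclusion principle for the cardinality of a union of two finite sets. The key observation is that an assignment $b \colon X \to \{0,1\}$ satisfies $\phi_1 \lor \phi_2$ if and only if it satisfies $\phi_1$ or it satisfies $\phi_2$ (or both). Combined with the consistency requirement $a \sim b$, which is a condition on $b$ that does not interact with which disjunct is satisfied, this gives the set-theoretic equality
\[
\sol{X}{\phi_1 \lor \phi_2}{a} = \sol{X}{\phi_1}{a} \cup \sol{X}{\phi_2}{a}.
\]
First I would verify this equality directly from the definitions: the left-hand side is the set of $b \colon X \to \{0,1\}$ with $b \models \phi_1 \lor \phi_2$ and $a \sim b$, and since $b \models \phi_1 \lor \phi_2$ is by definition equivalent to $b \models \phi_1$ or $b \models \phi_2$, such $b$ lie exactly in the union of $\sol{X}{\phi_1}{a}$ and $\sol{X}{\phi_2}{a}$.

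Once the union identity is in place, I would apply the elementary inclusion-exclusion formula $|A \cup B| = |A| + |B| - |A \cap B|$ to the two sets $A = \sol{X}{\phi_1}{a}$ and $B = \sol{X}{\phi_2}{a}$. Taking cardinalities on both sides and unfolding the notation $S_X(\psi,a) = |\sol{X}{\psi}{a}|$ then yields exactly
\[
S_X(\phi_1 \lor \phi_2, a) = S_X(\phi_1,a) + S_X(\phi_2,a) - |\sol{X}{\phi_1}{a} \cap \sol{X}{\phi_2}{a}|,
\]
which is the claimed statement. The hypothesis $X \supseteq \var{\phi_1} \cup \var{\phi_2}$ guarantees that $X$ contains all variables of both subformulas, so that $X$ is a legitimate index set for all three $\sol{X}{\cdot}{a}$ terms and membership of $b$ in each set is well-defined; this is the only place the hypothesis is used.

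There is essentially no main obstacle here, as the statement is purely a bookkeeping consequence of inclusion-exclusion; the lemma is stated to isolate a recursive step for later use. The one point deserving a sentence of care is confirming that the semantics of the disjunction $\phi_1 \lor \phi_2$ really does correspond to set union of the solution sets in the present framework, since the inputs are formulas of the form $R_1 \lor \ldots \lor R_k$ and one must make sure that satisfying $\phi_1 \lor \phi_2$ means satisfying at least one disjunct rather than some other convention. With that confirmed, the proof is immediate.
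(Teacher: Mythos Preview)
Your proposal is correct and follows exactly the same approach as the paper: observe that $Sol_X(\phi_1 \lor \phi_2, a) = Sol_X(\phi_1,a) \cup Sol_X(\phi_2,a)$ and then apply inclusion--exclusion. The paper's proof is a one-liner stating precisely this.
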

\begin{proof}
Using inclusion-exclusion and the fact that $Sol_X(\phi_1 \lor \phi_2, a) = Sol_X(\phi_1,a) \cup Sol_X(\phi_2,a)$ directly yields the result.
\end{proof}
 
The next lemma will allow us to efficiently compute $Sol_X(\phi_1 \lor \phi_2, a)$ recursively.
 
 \begin{lemma}
   \label{lem:fusion}
   Let $X_1 = \var{\phi_1}$ and $X_2 = \var{\phi_2}$. Assume that $X_1 \cap X_2 = \emptyset$ and let $X = X_1 \cup X_2$. Let $a$ be a partial assignment of the variables of $X$ and $a_1 = a|_{X_1}$, $a_2 = a|_{X_2}$. Then $Sol_X(\phi_1,a) \cap Sol_X(\phi_2,a) = \{ b_1 \oplus b_2 \mid b_i : X_i \rightarrow \{0,1\}, b_i \models \phi_i, a_i \subseteq b_i\}$ and $|Sol_X(\phi_1,a) \cap Sol_X(\phi_2,a)| = S_{X_1}(\phi_1,a_1)S_{X_2}(\phi_2,a_2)$.
 \end{lemma}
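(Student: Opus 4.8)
The plan is to prove the set equality first and then read off the cardinality identity from it. Since $X = X_1 \cup X_2$ with $X_1 \cap X_2 = \emptyset$, every assignment $b : X \to \{0,1\}$ decomposes uniquely as $b = b_1 \oplus b_2$ where $b_1 = b|_{X_1}$ and $b_2 = b|_{X_2}$. The key observation is that since $\var{\phi_i} = X_i$, whether $b$ satisfies $\phi_i$ depends only on $b|_{X_i} = b_i$; that is, $b \models \phi_i$ if and only if $b_i \models \phi_i$. This is the structural fact that makes the whole lemma work, so I would state it explicitly before manipulating the sets.

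For the set equality, I would argue by double inclusion, unwinding the definition $Sol_X(\phi_i, a) = \{ b : X \to \{0,1\} \mid b \models \phi_i,\ a \sim b \}$. Take any $b \in Sol_X(\phi_1, a) \cap Sol_X(\phi_2, a)$ and write $b = b_1 \oplus b_2$ as above. Then $b \models \phi_1$ and $b \models \phi_2$ give $b_1 \models \phi_1$ and $b_2 \models \phi_2$ by the restriction observation. For the consistency conditions, $a \sim b$ together with $a_i = a|_{X_i}$ and the fact that $b_i$ agrees with $b$ on $X_i$ yields $a_i \subseteq b_i$ (here $a_i$ is an assignment of a subset of $X_i = \var{b_i}$, so $\subseteq$ is the right relation). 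This shows $b$ lies in the right-hand set. Conversely, given $b_1, b_2$ with $b_i \models \phi_i$ and $a_i \subseteq b_i$, I would set $b = b_1 \oplus b_2$ and check that $b \models \phi_1$, $b \models \phi_2$ (again by the restriction observation) and $a \sim b$ (consistency on $X_1$ and on $X_2$ separately, assembled via the disjointness of $X_1$ and $X_2$, gives consistency on all of $X$). Thus $b$ lies in the intersection, completing the equality.

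Finally, for the cardinality statement I would exploit that the map $(b_1, b_2) \mapsto b_1 \oplus b_2$ is a bijection from $\{ b_1 \mid b_1 \models \phi_1,\ a_1 \subseteq b_1 \} \times \{ b_2 \mid b_2 \models \phi_2,\ a_2 \subseteq b_2 \}$ onto the intersection $Sol_X(\phi_1, a) \cap Sol_X(\phi_2, a)$; injectivity is immediate from the uniqueness of the decomposition over disjoint variable sets. The two factor sets are exactly $Sol_{X_1}(\phi_1, a_1)$ and $Sol_{X_2}(\phi_2, a_2)$ (note $a_i \sim b_i$ is equivalent to $a_i \subseteq b_i$ since $\var{a_i} \subseteq X_i = \var{b_i}$), so their sizes are $S_{X_1}(\phi_1, a_1)$ and $S_{X_2}(\phi_2, a_2)$, and the product formula follows.

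I do not expect a genuine obstacle here; the result is essentially a bookkeeping identity. The one point deserving care is keeping the consistency relations straight: verifying that $a \sim b$ on the combined domain $X$ really does decompose into the two independent conditions $a_1 \subseteq b_1$ and $a_2 \subseteq b_2$, which relies crucially on the hypothesis $X_1 \cap X_2 = \emptyset$ so that $a$ imposes no coupling between the two blocks of variables. Making that decomposition precise — and noting that $\subseteq$ and $\sim$ coincide here because each $a_i$ lives on a subset of $X_i$ — is the only place where one must avoid a sloppy argument.
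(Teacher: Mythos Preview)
Your proposal is correct and follows essentially the same approach as the paper: prove the set equality by double inclusion via the decomposition $b = b|_{X_1} \oplus b|_{X_2}$, then read off the product formula. The paper's own proof is considerably more terse (it calls the right-to-left inclusion trivial and dispatches the other direction in one line), but the underlying argument is identical to yours.
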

   \begin{proof}
     The inclusion from right to left is trivial. For the other inclusion, it is enough to remark that for $b \in Sol_X(\phi_1,a) \cap Sol_X(\phi_2,a)$, we have $b = b|_{X_1} \oplus b|_{X_2}$ and $a_i \subseteq b|_{X_i}$ as $a \subseteq b$. The equality of the sizes follows directly. 
   \end{proof}
 
We now show how we can add variables that do not appear in $\phi$.
 
 \begin{lemma}
   \label{lem:extend}
   Let $Y \subseteq X$ and $a:X_0\rightarrow \{0,1\}$ for $X_0\subseteq X$. Then \[S_X(\phi,a) = 2^{|X \setminus (Y\cup X_0)|}S_Y(\phi,a|_{Y}).\]
 \end{lemma}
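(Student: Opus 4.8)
The plan is to set up an explicit bijection that separates the variables of $X$ into three kinds: those that matter for satisfaction, those pinned down by consistency with $a$, and those left genuinely free. First I would record the implicit hypothesis $\var{\phi}\subseteq Y$ (which is what makes $S_Y(\phi,a|_Y)$ well defined) and partition $X$ into the three disjoint blocks $Y$, $X_0\setminus Y$, and $X\setminus(Y\cup X_0)$. The first block carries all the information relevant to $\phi$; the second is completely determined by the requirement $a\sim b$; and the third is unconstrained. The factor $2^{|X\setminus(Y\cup X_0)|}$ should come precisely from the last block.

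The core step is to show that the restriction map
\[
\Psi\colon Sol_X(\phi,a)\longrightarrow Sol_Y(\phi,a|_Y)\times\{0,1\}^{X\setminus(Y\cup X_0)},\qquad b\mapsto\bigl(b|_Y,\,b|_{X\setminus(Y\cup X_0)}\bigr),
\]
is a bijection. For well-definedness I would use two facts: since $\var{\phi}\subseteq Y$, whether $b$ satisfies $\phi$ depends only on $b|_Y$, so $b\models\phi$ if and only if $b|_Y\models\phi$; and since the domain of $a$ is $X_0\subseteq X$, the consistency $a\sim b$ is just $b|_{X_0}=a$, which after restriction to $Y$ yields $b|_Y\sim a|_Y$. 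Injectivity is immediate, because $b$ is reconstructed from $b|_Y$, the forced values $a|_{X_0\setminus Y}$ on $X_0\setminus Y$, and the free part $b|_{X\setminus(Y\cup X_0)}$. For surjectivity I would, given a pair $(c,d)$, glue $c$, $a|_{X_0\setminus Y}$ and $d$ on the three blocks to define $b$, and then check that $b\models\phi$ (inherited from $c$) and $a\sim b$ (agreement on $X_0\cap Y$ comes from $c\sim a|_Y$, agreement on $X_0\setminus Y$ holds by construction). The claimed identity follows by counting, using $|\{0,1\}^{X\setminus(Y\cup X_0)}|=2^{|X\setminus(Y\cup X_0)|}$.

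The only delicate point, and the step I expect to be the main obstacle, is the correct decomposition of the consistency condition: one must verify that $a\sim b$ is equivalent to the conjunction of $b|_Y\sim a|_Y$ and $b|_{X_0\setminus Y}=a|_{X_0\setminus Y}$, keeping careful track of how $X_0$ splits across $Y$ and $X\setminus Y$. Once this bookkeeping is settled, both directions of the bijection and the final count are routine.
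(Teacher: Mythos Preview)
Your proof is correct and follows essentially the same idea as the paper's: the paper's argument is the one-line observation that every solution on $Y$ can be extended arbitrarily on $X\setminus(Y\cup X_0)$, and you have simply made this precise by writing down the explicit bijection and verifying the consistency bookkeeping. Your remark that the implicit hypothesis $\var{\phi}\subseteq Y$ is needed for $S_Y(\phi,a|_Y)$ to make sense is a useful clarification.
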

   \begin{proof}
     Every solution of $\phi$ on $Y$ can be arbitrarily extended on $X\setminus (X_0 \cup Y)$ and thus the claim follows directly.
   \end{proof}

The next corollary lets us handle the disjunction of more than two terms.

 \begin{corollary}
 \label{cor:fusion}
   Let $\phi_1, \ldots, \phi_k$ be formulas with $X_i = \var{\phi_i}$ and $X_i \cap X_j = \emptyset$ for every combination $i,j\in [k], i\ne j$. Let $X$ be a set such that $X_1 \cup \ldots \cup X_k \subseteq X$ and let $a:X_0\rightarrow \{0,1\}$ for $X_0\subseteq X$, $a_i = a|_{X_i}$. Then  
 $$S_X(\bigvee_{j=1}^k \phi_j, a) = 2^{|X \setminus (X_0 \cup X_1 \cup \ldots \cup X_k)|} \sum_{i=1}^k S_{X_i}(\phi_i,a_i) \prod_{j=1}^{i-1} (2^{|X_j \setminus X_0|}-S_{X_j}(\phi_j,a_j))  \prod_{j=i+1}^{k}2^{|X_j \setminus X_0|}$$
  \end{corollary}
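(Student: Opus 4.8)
The plan is to reduce the $k$-fold disjunction to the two-term case of Lemma~\ref{lem:basiccount} by induction on $k$, while carefully tracking the ambient variable set so that the powers of two coming from Lemma~\ref{lem:extend} are accounted for. Let me first isolate the combinatorial heart of the formula. Since the $X_i$ are pairwise disjoint, by inclusion-exclusion over the events ``$b$ satisfies $\phi_i$'' I expect the number of assignments on $X_1 \cup \ldots \cup X_k$ consistent with $a$ and satisfying $\bigvee_j \phi_j$ to equal the total number of consistent assignments minus those satisfying \emph{none} of the $\phi_j$. Because the variable blocks are disjoint, an assignment satisfies none of the $\phi_j$ exactly when, on each block $X_j$, it is one of the $2^{|X_j \setminus X_0|} - S_{X_j}(\phi_j,a_j)$ consistent non-solutions; by independence across blocks this count factors as $\prod_{j=1}^k (2^{|X_j \setminus X_0|} - S_{X_j}(\phi_j,a_j))$. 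This already gives a clean closed form, and the telescoping product in the statement is simply the standard ``prefix/suffix'' rewriting of $\text{(total)} - \prod_j(\text{non-solutions})$.

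The cleanest route is induction on $k$. First I would establish the base case $k=1$, where the claimed identity is exactly Lemma~\ref{lem:extend} applied with $Y = X_1$. For the inductive step I would split $\bigvee_{j=1}^k \phi_j = \left(\bigvee_{j=1}^{k-1}\phi_j\right) \lor \phi_k$ and apply Lemma~\ref{lem:basiccount} with the two instances $\psi := \bigvee_{j=1}^{k-1}\phi_j$ (whose variable set is $X_1 \cup \ldots \cup X_{k-1}$) and $\phi_k$, taking the ambient set to be all of $X$. This yields
\[
S_X(\textstyle\bigvee_{j=1}^k \phi_j, a) = S_X(\psi,a) + S_X(\phi_k,a) - |Sol_X(\psi,a) \cap Sol_X(\phi_k,a)|.
\]
Since $\var{\psi}$ and $X_k$ are disjoint, Lemma~\ref{lem:fusion} (extended to the ambient $X$ via Lemma~\ref{lem:extend}) computes the intersection term as a product of an $S_{X_k}(\phi_k,a_k)$ factor with the count for $\psi$, and Lemma~\ref{lem:extend} converts every $S_X(\cdot,a)$ into $S_{\var{\cdot}}(\cdot, \cdot)$ times the appropriate power of two. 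Substituting the induction hypothesis for $S$ of $\psi$ and collecting the powers of two should reproduce the stated prefix-product formula.

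I expect the main obstacle to be purely bookkeeping rather than conceptual: the powers of two must be matched exactly. In particular, when applying Lemma~\ref{lem:fusion} to the disjoint pair $\psi$ and $\phi_k$, that lemma is stated for the ambient set equal to $\var{\psi} \cup X_k$, so I must first use Lemma~\ref{lem:extend} to pass between this set and the full $X$, picking up a factor $2^{|X \setminus (X_0 \cup X_1 \cup \ldots \cup X_k)|}$; verifying that this same global factor can be pulled out front consistently at each inductive level, and that the ``missing'' $\phi_k$-solutions contribute the factor $(2^{|X_k \setminus X_0|} - S_{X_k}(\phi_k,a_k))$ to the prefix products of all later-indexed terms, is where care is needed. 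Rather than grind through the induction, one may prefer to prove the factored identity $S_X(\bigvee_j \phi_j, a) = 2^{|X \setminus (X_0 \cup X_1 \cup \ldots \cup X_k)|}\left(\prod_{j=1}^k 2^{|X_j \setminus X_0|} - \prod_{j=1}^k (2^{|X_j \setminus X_0|} - S_{X_j}(\phi_j,a_j))\right)$ directly via inclusion-exclusion, and then observe by an elementary algebraic telescoping that the bracketed expression equals $\sum_{i=1}^k S_{X_i}(\phi_i,a_i) \prod_{j<i}(2^{|X_j \setminus X_0|}-S_{X_j}(\phi_j,a_j))\prod_{j>i}2^{|X_j \setminus X_0|}$; this separates the probabilistic content from the routine rewriting.
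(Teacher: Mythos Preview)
Your proposal is correct and follows essentially the same inductive strategy as the paper: base case via Lemma~\ref{lem:extend}, then peel off one disjunct using Lemma~\ref{lem:basiccount} and Lemma~\ref{lem:fusion}, adjust ambient variable sets with Lemma~\ref{lem:extend}, and substitute the induction hypothesis. The only noteworthy difference is that the paper splits off the \emph{first} disjunct $\phi_1$ rather than the last; because the target formula carries $(2^{|X_j\setminus X_0|}-S_{X_j})$ factors for $j<i$ and plain powers of two for $j>i$, peeling off $\phi_1$ makes the new factor $(2^{|X_1\setminus X_0|}-S_{X_1})$ slot directly into the prefix of every remaining summand, so no auxiliary identity is needed---whereas your last-term split requires the complement identity $2^{|X_1\cup\ldots\cup X_{k-1}\setminus X_0|}-S_{X_1\cup\ldots\cup X_{k-1}}(\psi,\cdot)=\prod_{j<k}(2^{|X_j\setminus X_0|}-S_{X_j})$, which you correctly supply via your alternative inclusion--exclusion argument.
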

 \begin{proof}
   The proof proceeds by induction on $k$. For $k=1$, we have to show $S_X(\phi_1,a)=2^{|X \setminus (X_0 \cup X_1)|}S_{X_1}(\phi_1,a_1)$ which is Lemma~\ref{lem:extend}.
 
   Now let $Z = X_1 \cup \ldots \cup X_{k+1}$ and $Z_1 = Z \setminus X_1$. To ease notations, we denote by $a'$ the assignment $a|_Z$. Remark that $a_i = a|_{X_i} = a'|_{X_i}$ and $a|_{Z_1} = a'|_{Z_1}$ as $X_i \subseteq Z$.

As $X_1$ is disjoint from $Z_1$, we can apply Lemma~\ref{lem:fusion} to $\phi_1$ and $\phi_2 \lor \ldots \lor \phi_{k+1}$ to get
 $$S_Z(\bigvee_{j=1}^{k+1} \phi_j,a') = S_Z(\phi_1,a') + S_Z(\bigvee_{j=2}^{k+1} \phi_j,a')-S_{X_1}(\phi_1,a_1)S_{Z_1}(\bigvee_{j=2}^{k+1} \phi_j,a|_{Z_1}).$$
 By applying Lemma~\ref{lem:extend} and distributivity, we get
 $$ S_Z(\bigvee_{j=1}^{k+1} \phi_j,a') = 2^{|Z_1 \setminus X_0|}S_{X_1}(\phi_1,a_1)+S_{Z_1}(\bigvee_{j=2}^{k+1}\phi_j,a|_{Z_1})(2^{|X_1 \setminus X_0|}-S_{X_1}(\phi_1,a_1)).$$
 Now we use the induction hypothesis on $\bigvee_{j=2}^{k+1}\phi_j$ and the equality $2^{|Z_1 \setminus X_0|} = \prod_{j=2}^{k+1} 2^{|X_j \setminus X_0|}$ to get 
 \begin{align*}  &  S_Z(\bigvee_{j=1}^{k+1} \phi_j,a') \\= 
   & \quad \prod_{j=2}^{k+1} 2^{|X_j \setminus X_0|} S_{X_1}(\phi_1,a_1) \\& \hspace*{1cm} +(2^{|X_1 \setminus X_0|}-S_{X_1}(\phi_1,a_1))\left(\sum_{i=2}^{k+1} S_{X_i}(\phi_i,a_i) \prod_{j=2}^{i-1} (2^{|X_j \setminus X_0|}-S_{X_j}(\phi_j,a_j)) \prod_{j=i+1}^{k+1} 2^{|X_j \setminus X_0|}\right)\\
 = & \sum_{i=1}^{k+1} S_{X_i}(\phi_i,a_i) \prod_{j=1}^{i-1} (2^{|X_j \setminus X_0|}-S_{X_j}(\phi_j,a_j))  \prod_{j=i+1}^{k+1}2^{|X_j \setminus X_0|}.\end{align*}
 
 Finally, applying Lemma~\ref{lem:extend} yields
 $$S_X(\bigvee_{j=1}^{k+1} \phi_j, a) = 2^{|X \setminus (X_0 \cup X_1 \cup \ldots \cup X_{k+1})|} \sum_{i=1}^{k+1} S_{X_i}(\phi_i,a_i) \prod_{j=1}^{i-1} (2^{|X_j \setminus X_0|}-S_{X_j}(\phi_j,a_j))  \prod_{j=i+1}^{k+1}2^{|X_j \setminus X_0|} .$$
  which is the desired result.
 \end{proof}
 
 
 A final lemma will help us to compute the size of the intersections of the solutions to a formula and a single relation.
 
 \begin{lemma}
 \label{lem:sum}
   $|Sol_X(R,a) \cap Sol_X(\phi,a)| = \sum_{b \in Sol_X(R,a)} S_X(\phi,b)$
 \end{lemma}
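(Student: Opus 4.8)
The plan is to prove the identity $|Sol_X(R,a) \cap Sol_X(\phi,a)| = \sum_{b \in Sol_X(R,a)} S_X(\phi,b)$ by establishing that the set on the left decomposes as a disjoint union indexed by the solutions of $R$. First I would unfold the definitions: $Sol_X(R,a) \cap Sol_X(\phi,a)$ is the set of assignments $c \colon X \to \{0,1\}$ that are consistent with $a$ and satisfy both $R$ and $\phi$. The key observation is that each such $c$ lies in exactly one of the sets $Sol_X(\phi,b)$ as $b$ ranges over $Sol_X(R,a)$, because $c$ itself, viewed as a full assignment on $X$ satisfying $R$ and consistent with $a$, is a member of $Sol_X(R,a)$, and $c$ is consistent with this particular $b=c$ while satisfying $\phi$.

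The central step is therefore to verify the set equality
\[
Sol_X(R,a) \cap Sol_X(\phi,a) = \bigsqcup_{b \in Sol_X(R,a)} \bigl(Sol_X(\phi,b) \cap \{c : c = b\}\bigr),
\]
or more cleanly, to show that $\sum_{b \in Sol_X(R,a)} S_X(\phi,b)$ counts exactly the assignments lying in the intersection. Here I must be careful about what $S_X(\phi,b)$ counts: since $b$ is a total assignment on $X$, the set $Sol_X(\phi,b)$ consists of all total assignments on $X$ that are consistent with $b$ and satisfy $\phi$. But two total assignments on the same variable set $X$ are consistent if and only if they are equal. Hence $Sol_X(\phi,b)$ is either $\{b\}$ (when $b \models \phi$) or $\emptyset$, so $S_X(\phi,b) \in \{0,1\}$ and equals $1$ exactly when $b \models \phi$.

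Given this, the sum $\sum_{b \in Sol_X(R,a)} S_X(\phi,b)$ simply counts the number of $b \in Sol_X(R,a)$ with $b \models \phi$. I would then argue that this count equals the cardinality of $Sol_X(R,a) \cap Sol_X(\phi,a)$: an assignment $b$ belongs to $Sol_X(R,a)$ and additionally satisfies $\phi$ precisely when it belongs to both $Sol_X(R,a)$ and $Sol_X(\phi,a)$ (the consistency with $a$ being inherited from membership in $Sol_X(R,a)$). This gives a bijection between the terms of the sum that contribute $1$ and the elements of the intersection, yielding the equality.

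I expect the only subtle point to be recognizing that $b$ in the index is a \emph{total} assignment on $X$ rather than a partial one, which collapses $S_X(\phi,b)$ to an indicator; once that is noted, the proof is essentially a one-line counting argument rather than a genuine obstacle. The main thing to state carefully is the relationship between consistency of total assignments and equality, so that the double-counting interpretation of the right-hand side is transparent.
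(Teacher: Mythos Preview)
Your proposal is correct and follows essentially the same approach as the paper's proof, which simply states that $Sol_X(R,a) \cap Sol_X(\phi,a) = \bigcup_{b \in Sol_X(R,a)} Sol_X(\phi,b)$ is a disjoint union. Your observation that each $b \in Sol_X(R,a)$ is a \emph{total} assignment on $X$, so that $Sol_X(\phi,b)$ collapses to either $\{b\}$ or $\emptyset$, is precisely the reason the union is disjoint; the paper leaves this implicit, while you spell it out.
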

 \begin{proof}
   This follows from the fact that $Sol_X(R,a) \cap Sol_X(\phi,a) = \bigcup_{b \in Sol_X(R,a)} Sol_X(\phi,b)$ and that the union is disjoint.
 \end{proof}
 
We now finally show the main result of this section.
 
\begin{theorem}
There is a polynomial time algorithm that, given an instance $\phi = \bigvee_{i=1}^m R_i$ and a disjoint branches decomposition of the hypergraph of $\phi$, computes the number of satisfying assignments of $\phi$.
\end{theorem}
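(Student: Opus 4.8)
The plan is to root the given disjoint branches decomposition $(\calT,\lambda)$ at an arbitrary node and to compute the numbers $S_{V_t}(\phi_t,a)$ by a bottom-up dynamic programming over $\calT$. For a node $t$ I write $R_t=\lambda(t)$ for the relation sitting at $t$, $V_t=\var{\phi_t}$ for the set of variables occurring in the subtree rooted at $t$, $\phi_t$ for the disjunction of all $R_{t'}$ with $t'$ in that subtree, and $\psi_t=\bigvee_i \phi_{t_i}$ for the disjunction over the subtrees of the children $t_1,\dots,t_k$ of $t$, so that $\phi_t=R_t\lor\psi_t$. The value we ultimately want is $S_{V_r}(\phi_r,\emptyset)$ at the root $r$, which equals $S_{\var{\phi}}(\phi,\emptyset)$.

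Two properties of the decomposition drive the recursion. First, the disjoint branches condition forces the subtrees of two distinct children of $t$ onto different branches, so their variable sets are disjoint, $V_{t_i}\cap V_{t_j}=\emptyset$ for $i\ne j$; this is exactly the hypothesis needed to evaluate $S_{V_t}(\psi_t,a)$ from the children values through Corollary~\ref{cor:fusion}. Second, the connectedness condition of the join tree implies that whenever a variable occurs both at a node and somewhere in the subtree of a descendant, it occurs at every node on the connecting path. In particular the only variables a subtree rooted at $t$ shares with the rest of $\phi$ lie in $R_t\cap R_p$, where $p$ is the parent of $t$, and consequently $S_{V_t}(\phi_t,a)$ depends on $a$ only through its restriction to this small interface $R_t\cap R_p\subseteq R_t$.

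For the recurrence at $t$ I would split $\phi_t=R_t\lor\psi_t$ with Lemma~\ref{lem:basiccount}, obtaining $S_{V_t}(\phi_t,a)=S_{V_t}(R_t,a)+S_{V_t}(\psi_t,a)-|Sol_{V_t}(R_t,a)\cap Sol_{V_t}(\psi_t,a)|$. The term $S_{V_t}(R_t,a)$ is computed directly by counting the explicitly listed tuples of $R_t$ consistent with $a$ and extending them freely via Lemma~\ref{lem:extend}. The term $S_{V_t}(\psi_t,a)$ is assembled from the children values $S_{V_{t_i}}(\phi_{t_i},a|_{V_{t_i}})$ by Corollary~\ref{cor:fusion}, which applies by the disjointness above. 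For the intersection term I would apply Lemma~\ref{lem:sum} and then group the assignments $b$ by their restriction $\tau$ to the variables of $R_t$, which must be one of the listed tuples of $R_t$ consistent with $a$; this gives $|Sol_{V_t}(R_t,a)\cap Sol_{V_t}(\psi_t,a)|=\sum_{\tau\in R_t,\,\tau\sim a}S_{V_t}(\psi_t,\tau)$, a sum with at most $|R_t|$ terms, each again evaluated through Corollary~\ref{cor:fusion}. Leaves, where $\phi_t=R_t$, form the base case and are handled by the tuple count directly.

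The delicate point — and the step I expect to be the main obstacle — is bounding the number of distinct partial assignments $a$ for which $S_{V_t}(\phi_t,\cdot)$ is ever requested, so that all tables stay of polynomial size. Every request at a node $t$ is generated while evaluating its parent $p$: it is either the restriction to $V_t$ of an assignment that was itself requested at $p$ (coming from the $\psi_p$-term), or the restriction to $V_t$ of one of the at most $|R_p|$ explicitly listed tuples of $R_p$ (coming from the intersection term). The first kind yields at most as many distinct assignments as were requested at $p$, and the second at most $|R_p|$ further ones, so the number of requests at $t$ is at most that at $p$ plus $|R_p|$. Unwinding this inequality along the path to the root bounds the number of requests at any node by $\sum_{\alpha}|R_\alpha|$, which is at most the size of the input. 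Since $\calT$ has exactly one node per relation and each requested value is built from children values by a polynomial-size expression via the lemmas above, the whole computation runs in polynomial time. Making this counting argument fully rigorous — in particular using the connectedness condition to justify that the restrictions above genuinely live on the small interface $R_t\cap R_p$ and remain meaningful queries for the children — is where the real work lies; the algebraic identities supplied by the lemmas are then routine.
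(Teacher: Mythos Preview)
Your proposal is correct and follows essentially the same approach as the paper's proof: the same bottom-up recursion using Lemma~\ref{lem:basiccount}, Corollary~\ref{cor:fusion}, and Lemma~\ref{lem:sum}, with the disjoint-branches condition supplying the disjointness hypothesis of the corollary and connectedness confining the interface to $\var{R_t}$. The only presentational difference is that the paper names the table entries up front—$S_{V_t}(\phi_t,\emptyset)$ together with $S_{V_t}(\phi_t,b|_{\var{R_t}})$ for every tuple $b$ in a relation at an ancestor of $t$, at most $r+1$ per node—whereas you arrive at the same bound by your inductive ``requests at $t$ $\le$ requests at $p$ $+$ $|R_p|$'' argument; the connectedness justification you flag as the main obstacle is in fact a one-line observation in the paper (if $a$ has domain $\var{R_t}$ then $a|_{V_{t_i}}$ has domain contained in $\var{R_{t_i}}$), so you should not expect difficulty there.
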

\begin{proof}
Let $\phi= R_1 \lor \ldots \lor R_m$ be an instance with hypergraph $\calH$. Let $r_i := |R_i|$ and let $r:=\sum_{i=1}^m r_i$. Let furthermore $(\calT,\lambda)$ be a disjoint branches decomposition of $\calH$. For a vertex $t$ of $\calT$, we denote by $\calT_t$ the subtree of $\calT$ rooted in $t$ and by  $\phi_t$ the associated subinstance. Furthermore, $R_t$ is defined to be the relation associated to $t$. Finally, we denote be $V_t$ the set of variables of $\phi_t$. 

We will give a polynomial time algorithm that computes inductively from the leaves to the root of $\calT$ certain values $S_X(\psi,a)$ where $\psi$ is a subinstance of $\phi$. Our goal is to compute $S_{V_r}(\phi_r,\emptyset)$ where $r$ is the root of $\calT$, since this value is the number of solutions of $\phi$.
More precisely, for a given $t$ in the tree, we compute $S_{V_t}(\phi_t, \emptyset)$ and for all ancestors $u$ of $t$ and all $b \in R_u$, we compute $S_{V_t}(\phi_t,b|_{\var{R_t}})$. Since there are $m$ vertices in $\calT$ and at most $r+1$ values to compute for each vertex, we will to compute at most $m(r+1)$ different values. We will show how to compute these values in polynomial time to get a polynomial time algorithm overall.
 
Let $t$ be a vertex of $\calT$, $u$ one of his ancestors and $b \in R_u$. If $t$ is a leaf, then $\phi_t$ consists only of the relation $R_t$. Then $Sol_{V_t}(\phi_t, \emptyset)= R_t$ and $Sol_{V_t}(\phi_t,b|_{\var{R_t}})=\{ a\in R_t\mid a\sim b\}$ so the computations can be done efficiently.
 
Now that assume $t$ has children $t_1,\ldots,t_k$. To ease notation, let $V_i = V_{t_i}$, $\phi_i = \phi_{t_i}$ and $R_i = R_{t_i}$. Observe that by Lemma~\ref{lem:basiccount} we have
$$ S_{V_t}(\phi_t,\emptyset) = S_{V_t}(R_t,\emptyset) + S_{V_t}(\bigvee_{j=1}^k \phi_j, \emptyset) - |Sol_{V_t}(R_t,\emptyset) \cap Sol_{V_t}(\bigvee_{j=1}^k \phi_j, \emptyset)|.$$
As the variables of the $\phi_i$ are disjoint, by Corollary~\ref{cor:fusion}, one can compute $S_{V_t}(\bigvee_{j=1}^k \phi_j, \emptyset)$ in $O(k)$ if the values of $Sol_{V_i}(\phi_i,\emptyset)$ are precomputed, which is the case by induction.
 
In addition, $S_{V_t}(R_t,\emptyset) = 2^{|V_t \setminus \var{R_t}|} |R_t|$ since a solution of $R_t$ on variables $V_t$ is a solution of $R_t$ on $\var{R_t}$ and any assignment of the other variables.
 
Finally, $|Sol_{V_t}(R_t,\emptyset) \cap Sol_{V_t}(\bigvee_{j=1}^k \phi_j, \emptyset)| = \sum_{a \in R_t} S_{V_t}(\bigvee_{j=1}^k \phi_j, a)$ by Lemma~\ref{lem:sum}. By Corollary~\ref{cor:fusion}, one can compute for each $a$ the value $S_{V_t}(\bigvee_{j=1}^k \phi_j, a)$ in time $O(k)$ if the values of $S_{V_i}(\phi_i, a|_{V_i})$ are precomputed. But since the domain of $a$ is $\var{R_t}$, we have $a|_{V_i} = a|_{\var{R_t} \cap V_i} = a|_{\var{R_i}}$ by connectedness of the variables in the join tree $\calT$. Thus $S_{V_i}(\phi_i, a|_{V_i}) = S_{V_i}(\phi_i, a|_{\var{R_i}})$ which is precomputed by hypothesis.
 
Let $b' = b|_{\var{R_t}}$. We compute $S_{V_t}(\phi_t, b')$ in the following way, similarly to before. We start with Lemma~\ref{lem:basiccount} to get
$$ S_{V_t}(\phi_t,b') = S_{V_t}(R_t,b') + S_{V_t}(\bigvee_{j=1}^k \phi_j, b') - |Sol_{V_t}(R_t,b') \cap Sol_{V_t}(\bigvee_{j=1}^k \phi_j, b')|.$$
Again, by Corollary~\ref{cor:fusion}, one can compute $S_{V_t}(\bigvee_{j=1}^k \phi_j, b')$ in $O(k)$ if $S_{V_i}(\phi_i, b'|_{V_i})$ are known. But as the domain of $b'$ is $\var{R_t}$, $b'|_{V_i} = b'|_{\var{R_i}}$ by connectedness of the variables in $\calT$. So $S_{V_i}(\phi_i, b'|_{V_i})$ is precomputed since $u$ is also an ancestor of $t_i$.
 
Moreover, $S_{V_t}(R_t,b') = \sum_{a \in R_t, b' \subseteq a} 2^{|V_t \setminus \var{R_t}|}$ which can be computed in $O(|R_t|)$. 
 
Finally, by Lemma~\ref{lem:sum}, we have 
$$|Sol_{V_t}(R_t,b') \cap Sol_{V_t}(\bigvee_{j=1}^k \phi_j, b')| = \sum_{a \in R_t, b' \subseteq a} S_{V_t}(\bigvee_{j=1}^k \phi_j, a).$$
And again, by Corollary~\ref{cor:fusion}, we can compute $S_{V_t}(\bigvee_{j=1}^k \phi_j, a)$ in time $O(k)$ if $S_{V_i}(\phi_i, a|_{V_i})$ is precomputed. For the same reasons as above, $a|_{V_i} = a|_{\var{R_i}}$, thus these values were already computed by induction. 
 
To conclude, we have seen that one can compute the $S_{V_t}(\phi_t,\emptyset)$ and $S_{V_t}(\phi_t,b|_{\var{R_t}})$ for each $b\in R_u$ where $u$ is an ancestor of $t$ with $O(k\cdot r)$ arithmetic operations. Thus we can compute $S_{V_r}(\phi_r,\emptyset)$ in polynomial time.
\end{proof}

\section{Computing disjoint branches decompositions}\label{sct:computedecomps}

In this section we will show how to compute disjoint branches decompositions of hypergraphs in polynomial time. We will first introduce $PQF$-trees, the datastructure that our algorithm relies on, then consider some structural properties of hypergraphs with disjoint branches decompositions and finally describe the algorithm itself, relying on objects we call $A$-separators.

\subsection{$PQF$-trees}
\label{subsec:pqf}

$PQ$-trees are a data structure introduced by Booth and Lueker \cite{boothlueker} originally to check matrices for the so-called consecutive ones property. This problem can be reformulated as follows in our setting: Given a hypergraph $\calH=(V,E)$, is there an ordering $\ell = e_1 \ldots e_m$ of the edges such that if $v \in e_i \cap e_j$, then for all $i \leq k \leq j$, $v \in e_k$? We encode ordering of edges by lists. We call such a list {\em consistent} for $\calH$. Note that the notion of consistent lists matches exactly our notion of join paths. 

A $PQ$-tree is a compact way of representing all the consistent lists for a hypergraph. We introduce a generalization of this data structure which we call $PQF$-trees.

\begin{definition}
Let $\calH=(V,E)$ be a hypergraph. A {\em $PQF$-tree} for $\calH$ is defined to be an ordered tree with leaf set $E$ such that 
\begin{itemize}
 \item the internal nodes are labeled with $P$, $Q$ or $F$, 
 \item the $P$-nodes and $F$-nodes have at least two children, and
 \item the $Q$-nodes have at least $3$ children.
\end{itemize}
A $PQF$-tree without $F$-nodes is called a $PQ$-tree.
\end{definition}

$PQF$-trees will be used to encode sets of permutations of the edge set of a hypergraph that have certain properties. We write these permutations simply as (ordered) lists. To this end, we define some notation for lists and sets of lists. The concatenation of two ordered lists $\ell_1,\ell_2$ will be denoted by $\ell_1 \ell_2$. If $L_1,L_2$ are two sets of lists, we denote by $L_1L_2$ the set $\{\ell_1\ell_2 \mid \ell_1 \in L_1,\ell_2 \in L_2\}$. With this notation we now define the notion of the frontiers of a $PQF$-tree which will be central to this section.

\begin{definition}
The {\em frontiers} $\calF(T)$ of a $PQF$-tree $T$ for $\calH=(V,E)$ are a set of ordered list of the elements of $E$ defined inductively by
\begin{itemize}
\item if $T$ is a leaf $e$, then $\calF(T) = \{e\}$,
\item if $T$ is rooted in $t$, having children $t_1,\ldots,t_k$, then
  \begin{itemize}
  \item if $t$ is an $F$-node then $\calF(T) = \calF(T_1) \ldots \calF(T_k)$,
  \item if $t$ is a $Q$-node then $\calF(T) = (\calF(T_1) \ldots \calF(T_k)) \cup (\calF(T_k) \ldots \calF(T_1))$,
  \item if $t$ is a $P$-node then $\calF(T) = \bigcup_{\sigma \in \calS_k} \calF(T_{\sigma(1)}) \ldots \calF(T_{\sigma(k)})$ where $\calS_k$ is the set of permutations of $[k]$,
  \end{itemize}
  where $T_i$ is the subtree of $T$ rooted in $t_i$.
\end{itemize}

If for all $\ell \in \calF(T)$, $\ell$ is a consistent list for $\calH$, then we say that $T$ is \emph{consistent} for $\calH$.
\end{definition}
 
We recall the main theorem of \cite{boothlueker}, which allows to compute all possible join paths of a hypergraph in polynomial time.
\begin{theorem}[\cite{boothlueker}]
\label{th:boothluecker}
Given a hypergraph $(V,E)$, one can compute in time $O(|E||V|)$ a $PQ$-tree $T$ such that $\calF(T)$ is exactly the set of consistent lists for $\calH$.
\end{theorem}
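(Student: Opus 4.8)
The plan is to recognise this as the classical consecutive-ones problem and to invoke the $PQ$-tree machinery of Booth and Lueker. First I would observe that, by the definition of a consistent list given just above the statement, an ordering $\ell = e_1 \ldots e_m$ of $E$ is consistent for $\calH$ exactly when, for every vertex $v \in V$, the set of edges containing $v$, namely $E_v := \{e \in E : v \in e\}$, occupies a contiguous block of $\ell$. Thus the consistent lists are precisely the orderings of the ground set $E$ in which each of the $|V|$ prescribed subsets $E_v$ is consecutive; this is verbatim the consecutive-ones property for the family $\{E_v\}_{v \in V}$ over the ground set $E$, which is the exact setting for which $PQ$-trees were designed.

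Next I would build the tree incrementally. Start from the universal tree: a single $P$-node whose children are all the leaves in $E$, so that its frontier is the set of all orderings of $E$. Then process the constraint sets $E_v$ one vertex at a time, maintaining the invariant that after treating $v_1, \ldots, v_i$ the frontier of the current tree is exactly the set of orderings in which each of $E_{v_1}, \ldots, E_{v_i}$ is consecutive. To incorporate a new constraint $S = E_v$, perform a bottom-up pass labelling each node full (all its leaves lie in $S$), empty (none do), or partial; identify the pertinent subtree spanning the leaves of $S$; and apply the Booth--Lueker template reductions to restructure that subtree so that the full leaves are forced to be consecutive while the remaining permutational freedom is preserved. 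Correctness follows by induction on the number of processed constraints: one checks that each template transforms the frontier into exactly the subfamily of previously admissible orderings in which $S$ is in addition consecutive. If at some stage no admissible ordering can make $S$ consecutive, the reduction returns the null tree of empty frontier, which is again exactly the (empty) set of consistent lists.

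For the running-time bound I would rely on the amortised analysis of Booth and Lueker: using the two-pass bubble-up/reduction scheme together with child-count and pertinent-leaf bookkeeping, each reduction step can be charged to the size of its pertinent subtree plus a bounded overhead, so the total cost over all $|V|$ constraints is linear in the total input size $\sum_{v \in V} |E_v| = \sum_{e \in E} |e|$, which is $O(|E||V|)$. Since the incremental construction introduces no $F$-nodes, the object produced is a genuine $PQ$-tree, and by the frontier invariant $\calF(T)$ equals exactly the set of consistent lists of $\calH$, as required.

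The main obstacle is the template-reduction step itself: the numerous cases (the $P$- and $Q$-templates of the original paper) must each be shown to preserve the frontier invariant, and the linear-time guarantee hinges on a delicate amortised argument rather than on any single local estimate. This is precisely the technical heart of Booth and Lueker's work, so rather than reprove it in full I would invoke their analysis, having first supplied the one new ingredient needed here, namely the translation of consistent lists of $\calH$ into consecutive-ones constraints $\{E_v\}_{v \in V}$.
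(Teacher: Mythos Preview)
Your proposal is correct, but note that the paper does not actually prove this theorem: it is stated as a cited result from Booth and Lueker \cite{boothlueker} and used as a black box. Your sketch---translating consistent lists into the consecutive-ones property for the family $\{E_v\}_{v\in V}$ and then invoking the incremental $PQ$-tree construction with its amortised analysis---is exactly the content of the cited reference, so in effect you are outlining the proof that the paper simply imports. The translation you give in the first paragraph is the right observation and is implicit in the paper's framing; everything after that is Booth--Lueker's own argument, which both you and the paper defer to.
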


In order to compute disjoint branches decompositions, we will need to compute join paths with additional restrictions. This is the reason for the introduction of $F$-nodes. It will be convenient to not have $F$-nodes that are children of other $F$-nodes and thus we introduce the following normal form for $PQF$-trees. 

\begin{definition}
A $PQF$-tree $T$ is said to be in {\em normal form} if there is no $F$-node in $T$ having an $F$-node as a child.
\end{definition}

Clearly, if an $F$-node $t$ has a child $u$ in $T$ which is also an $F$-node, then we can remove~$u$ from $T$ and connect its children to $t$ without changing $\calF(T)$. Thus we may always assume that all $PQF$-trees we encounter are in normal form.


We will in the remainder of this section use certain subtrees of $PQF$-trees which we call $PQF$-subtrees. These will be trees rooted in a vertex $t$ of a $PQF$-tree, but they will not necessarily contain all descendants of $t$. Instead, we allow to ``cut off'' certain trees that are rooted by children of $t$. We now give a formal definition of $PQF$-subtrees. As usual, the {\em subtree rooted in $t$} is defined to be the tree induced by $t$ and all its descendants.

\begin{definition}
Let $T$ be a $PQF$-tree, and let $t$ be a vertex of $T$. A subgraph $S$ of $T$ is said to be a \emph{$PQF$-subtree rooted in $t$} if
\begin{itemize}
\item $t$ is a leaf and $S$ consists of the graph containing only $t$,
\item $t$ is a $P$-node and $S$ is the subtree rooted in $t$, or 
\item $t$ is a $Q$-node or an $F$-node with children $t_1, \ldots, t_k$ and there exists $i,j$ such that $1 \leq i < j \leq k$ and $S$ is the graph containing $t$ and $T_i,\ldots,T_k$, the subtrees rooted in $t_i, \ldots, t_j$. 
\end{itemize}
\end{definition}

We will now show that $PQF$-subtrees allow us to filter the frontier of a $PQF$-tree for certain lists that we will be interested in later. Remember that the \emph{depth} of a node in a tree is its distance from the root.

\begin{lemma}
\label{lem:pqfsub}
Let $T$ be a consistent $PQF$-tree for $(V,E)$ in normal form. Let $V' \subseteq V$ and $A = \{e \in E \mid  V' \subseteq e \}$. Then there exists a $PQF$-subtree $T_{V'}$ of $T$ such that the labels of the leaves of $T_{V'}$ are exactly $A$.
\end{lemma}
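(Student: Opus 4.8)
The plan is to reduce the statement to a purely combinatorial fact about $PQF$-trees and then prove that fact by induction on the tree. The crucial first observation is that $A$ is a contiguous block in \emph{every} frontier of $T$. Indeed, since $T$ is consistent, every $\ell\in\calF(T)$ is a consistent list, so for each vertex $v$ the set of edges containing $v$ occupies an interval of $\ell$. As $A=\{e : V'\subseteq e\}=\bigcap_{v\in V'}\{e : v\in e\}$ is an intersection of such intervals, and the intersection of intervals of a linear order is again an interval, $A$ is contiguous in $\ell$. Hence it suffices to prove: if a nonempty leaf-set $A$ is contiguous in every frontier of a consistent $PQF$-tree in normal form, then some $PQF$-subtree has leaf-set exactly $A$. (I assume $A\neq\emptyset$, as otherwise there is nothing to realize.)

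I would prove this by induction on the number of nodes of $T$, examining the root $t$ with children $t_1, \ldots, t_k$ and subtrees $T_1, \ldots, T_k$ with leaf-sets $E_1, \ldots, E_k$. The base case is a single leaf. For the inductive step let $I=\{i : E_i\cap A\neq\emptyset\}$. If $|I|=1$, then $A$ lies inside one $E_i$; since $A$ is still contiguous in every frontier of $T_i$ (each such frontier extends to one of $T$), induction gives a $PQF$-subtree of $T_i$ with leaf-set $A$, which is also one of $T$. So the interesting case is $|I|\ge 2$, where $t$ is the least common ancestor of $A$. Here I use the two basic features of frontiers: the leaves of each $T_i$ form a contiguous block whose internal order is an arbitrary frontier of $T_i$, and the blocks are arranged according to the node type (all permutations for a $P$-node, the order and its reverse for a $Q$-node, the fixed order only for an $F$-node), these choices being independent. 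I first pin down $I$: if some $j\notin I$ could be placed between two blocks of $I$ (always possible for a $P$-node, and for a $Q$- or $F$-node unless $I$ is an interval of the child order), the nonempty $A$-free block $E_j$ would split $A$, contradicting contiguity. Thus $I=\{1, \ldots, k\}$ for a $P$-node and $I$ is a contiguous interval $\{p, \ldots, q\}$ with $p<q$ for a $Q$- or $F$-node.

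To finish I show each block of $I$ lies entirely in $A$, so that taking whole child-subtrees reproduces $A$ exactly. For a $P$-node, placing a block $i\in I$ first (with another $A$-block to its right) forces $A\cap E_i$ to be a suffix of its frontier, while placing $i$ last forces it to be a prefix; a nonempty set that is simultaneously a prefix and a suffix of one ordering is the whole ordering, so $A\cap E_i=E_i$ for all $i$ and $A=E$. For a $Q$-node the same two-sided containment comes for free: the interior blocks $p<i<q$ are trivially full, lying between $A$-leaves, and for the two boundary blocks the order and its reverse make $A\cap E_p$ both a suffix and a prefix of a single frontier, again forcing full containment. In these two cases the resulting $PQF$-subtree is the whole subtree at $t$ (for $P$) or the interval $[p,q]$ (for $Q$).

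The one genuinely delicate case, and the reason for the normal-form hypothesis, is the $F$-node, whose fixed child order yields only one-sided information: contiguity forces $A\cap E_p$ to be a suffix of every frontier of the left boundary child $T_p$ (and $A\cap E_q$ a prefix of $T_q$), but not both, and a proper suffix can be stable under this when $T_p$ is itself an $F$-node, which would make $A$ unrealizable. Here I would invoke the key sub-lemma: if a nonempty $S\subseteq E_p$ is a suffix of every frontier of $T_p$ and the root of $T_p$ is a leaf, a $P$-node, or a $Q$-node, then $S=E_p$. For a leaf this is clear; for a $P$-node, putting any child last forces $S$ to contain that whole child, hence all of them; for a $Q$-node, the order and its reverse force $S$ to meet both the first and the last child, and a suffix meeting the first child must contain everything after it. Normal form guarantees that a child of the $F$-node $t$ is never an $F$-node, so the sub-lemma applies to $T_p$ and $T_q$, giving $A=E_p\cup\cdots\cup E_q$ and the interval $PQF$-subtree $[p,q]$. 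Keeping the $F$-exclusion explicit in the sub-lemma is the crux of the whole argument.
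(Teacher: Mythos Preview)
Your proof is correct and follows essentially the same strategy as the paper: locate the least common ancestor of $A$ in $T$ and case-split on its label, invoking the normal-form hypothesis exactly at the $F$-node boundary to force the extreme children to lie entirely in $A$. Your write-up is in fact cleaner than the paper's in two respects: you handle a general $V'$ in one stroke via ``an intersection of intervals is an interval'' (the paper instead iterates vertex by vertex, reapplying the single-vertex case to successive $PQF$-subtrees), and your sub-lemma for the $F$-node boundary is more rigorous than the paper's assertion that one may simply ``reverse $\ell_i$'' whenever $t_i$ is not an $F$-node---that literal reversal claim fails once $t_i$ has $F$-node descendants, whereas your prefix/suffix argument goes through.
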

\begin{proof}
First assume that $V' = \{v\}$. Let $t$ be the deepest node of $T$ such that the set $A$ is contained in the set of labels of the leaves of the subtree rooted in $t$. 

If $t$ is a leaf, then the $PQF$-subtree containing only $t$ is the subtree we are looking for. 

Otherwise, let $t_1, \ldots, t_k$ be the children of $t$. By maximality of the depth of $t$, we know that there are at least two children of $t$ such that the subtrees rooted in them contain elements of $A$ in their leaf labels. Let $t_i$ be the leftmost such child and $t_j$ the rightmost one. Note that $1 \leq i < j \leq k$. Let $T_0$ be the subtree of $T$ rooted in $t$ and $T_1,\ldots,T_k$ the subtrees rooted in $t_1,\ldots,t_k$. Furthermore, we choose $\ell_s \in \calF(T_s)$ for $s = 1,\ldots, k$ arbitrarily. Note that the leaves of $T_s$ are the edges in $l_s$.

If $t$ is a $P$-node then all the leaves of $T_0$ are in $A$. Indeed, suppose first that $i \neq 1$. As $t$ is a $P$-node, there exists a list $\ell$ in $\calF(T)$ having $\ell_i\ell_1\ell_j$ as a sublist. By definition of $i$, the leaves in $\ell_1$ do not contain $v$, but $v$ is contained in some edges in $\ell_i$ and $\ell_j$. Thus the connectedness condition for $v$ is not respected in $\ell$. This is a contradiction since $T$ is consistent for $\calH$. Analogously, we show $j = k$. Now, since $t$ is a $P$-node, the lists $\ell_1\ell_k$ and $\ell_k\ell_1$ are sublist of of lists in $\calF(T)$. Then because of the connectivity condition, the first and the last edges of $\ell_1$ and $\ell_k$ contain $v$. Thus all edges of $\ell_1$ and $\ell_k$ are in $A$ and consequently all edges of the subtree rooted in $t$.

If $t$ is an $F$-node, we show that $\ell_i$ contains only edges in $A$. By connectedness of $v$, the last element of $\ell_i$ is in $A$. Moreover, as $T$ is in normal form, $t_i$ is not an $F$-node, thus the list $\ell_i'$ obtained by reversing $\ell_i$ is in $\calF(T_i)$ and for the same reason, its last edges---that is the first of $\ell_i$---is in $A$. It follows that all edges in $\ell_i$ are in $A$. Analogously, all edges in $\ell_j$ are in $A$, and by connectedness of $v$, all edges in the leafs of $T_i,\ldots,T_j$ are in $A$. Thus the leaves of the $PQF$-subtree rooted in $t$ containing the subtrees $T_i, \ldots, T_j$ are exactly $A$, so this is the desired $PQF$-subtree.

If $t$ is a $Q$-node, then $\ell_i \ldots \ell_j$ and $\ell_j \ldots \ell_i$ are sublists of a list in $\calF(T)$.  With the connectivity condition it follows that that the first and the last edges of $\ell_i$ and $\ell_j$ are in $A$. Thus we find a $PQF$-subtree with the desired properties as before. This completes the case $V'=\{v\}$.

Now, if $V' = \{v_1, \ldots, v_p\}$ we construct $S$ iteratively. To this end, let $T_0 = T$ and for $i=1, \ldots p$ we let $T_{i}$ be the $PQF$-subtree of $T_{i-1}$ whose leaves are exactly the edges containing $v_{i+1}$. The tree $T_i$ can be found as argued above. Obviously, we have $T_{V'} = T_p$ whic completes the proof.
\end{proof}

During the construction of disjoint branches decompositions later, we will put restrictions on the position of some edges in join paths. To do so we will use the algorithm of the following proposition.

\begin{proposition}
\label{prop:force}
  There is a polynomial time algorithm \texttt{Force} that, given a $PQF$-tree $T$ and a $PQF$-subtree $S$ of $T$, computes in polynomial time a $PQF$-tree $T' = \text{\emph{\texttt{Force(T,S)}}}$ such that $\calF(T') = \{\ell_1\ell_2 \in \calF(T) \mid \ell_2 \in \calF(S) \}$. If this set is empty, the algorithm rejects.
\end{proposition}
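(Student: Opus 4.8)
The plan is to define \texttt{Force} recursively by walking down the unique path in $T$ from its root $r$ to the root $t$ of the $PQF$-subtree $S$, at each step ``pulling'' the part of the tree leading to $S$ to the right so that the leaves of $S$ become a suffix of every frontier. The guiding observation is the following: writing $A$ for the leaf set of $S$, a frontier $\ell \in \calF(T)$ can be written $\ell = \ell_1 \ell_2$ with $\ell_2 \in \calF(S)$ if and only if the edges of $A$ occur as a suffix of $\ell$ in an order belonging to $\calF(S)$. Since all subtrees have nonempty leaf sets, this is possible only if, at every node on the path from $r$ to $t$, the child through which the path descends can be placed \emph{last} among its siblings. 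The recursion enforces exactly this choice and otherwise rejects.

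I would first treat the base case $r = t$. If $t$ is a leaf or a $P$-node, then $S$ is the whole subtree rooted at $t$, $A$ is its full leaf set, every frontier trivially ends with a frontier of $S$, and we return $T$ unchanged. If $t$ is a $Q$- or $F$-node with children $t_1,\dots,t_k$ and $S$ consists of the contiguous range $t_i,\dots,t_j$, then $A$ can be a suffix only when the range touches an end: for an $F$-node we need $j = k$ and return $T$, rejecting otherwise; for a $Q$-node we need $i = 1$ or $j = k$, in which case we fix the forward or reverse orientation that puts the range last and relabel the node as an $F$-node, rejecting when $1 < i$ and $j < k$. For the inductive step $r \neq t$, let $c$ be the child of $r$ on the path to $t$ and $T_c$ the subtree rooted at $c$; I would recursively compute $T_c' = \texttt{Force}(T_c, S)$, propagating a rejection. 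It then remains to place $T_c'$ last among the children of $r$: if $r$ is a $P$-node, return the $F$-node whose first child collects the remaining children of $r$ under a $P$-node (or the single remaining child directly) and whose last child is $T_c'$; if $r$ is a $Q$- or $F$-node, the node $c$ must be a first or last child (only a last child for $F$-nodes), and we output the $F$-node listing the other children in the unique admissible orientation followed by $T_c'$, rejecting when $c$ is strictly interior. Throughout, consecutive $F$-nodes are merged to preserve normal form.

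Correctness is proved by induction on the length of the path from $r$ to $t$, establishing the stronger invariant that $\calF(T')$ factorizes as a product $L\,\calF(S^{\ast})$, where $L$ is exactly the set of admissible prefixes $\ell_1$ and $\calF(S^{\ast}) \subseteq \calF(S)$ is exactly the set of suffixes $\ell_2 \in \calF(S)$ realizable at the end of a frontier of $T$. The point that makes this decomposition valid is that the internal order of the $A$-block is governed solely by the choices made at nodes of $S$, which are independent of the choices arranging all other edges; this is precisely what lets us glue the recursively computed $T_c'$ after the siblings of $c$ by a single $F$-node. The rejection cases are exactly those in which some node on the path cannot send its $A$-child to the end, that is, exactly the situations in which no frontier of $T$ has an $A$-suffix, so the algorithm rejects precisely when the target set is empty. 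Finally, each call descends one level of $T$ and performs work linear in the number of children plus a normal-form merge, so \texttt{Force} runs in polynomial time.

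I expect the main obstacle to be the combination of the case analysis with a clean proof of the product invariant: one must argue that restricting to frontiers whose $A$-block is a suffix neither loses nor gains orderings of $S$ beyond those dictated by the orientation choices forced at the $Q$- and $F$-nodes along the path, and that the admissible prefixes and suffixes genuinely vary independently, so that the $F$-node gluing reproduces $\{\ell_1\ell_2 \in \calF(T) \mid \ell_2 \in \calF(S)\}$ exactly rather than merely up to inclusion.
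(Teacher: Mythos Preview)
Your proposal is correct and follows essentially the same approach as the paper: both argue by induction on the depth of the root of $S$, handling the base case by a case analysis on the node type at the root of $S$ (rejecting when the $(i,j)$-range of a $Q$- or $F$-node does not touch the appropriate end, and relabeling $Q$ as $F$ otherwise), and handling the inductive step by recursively forcing inside the child subtree containing $S$ and then pulling that subtree to the rightmost position under an $F$-node at the root. Your explicit product invariant $\calF(T') = L\,\calF(S^\ast)$ is a cleaner way to phrase what the paper establishes more informally, and your edge-case handling (single remaining child under a $P$-node, merging consecutive $F$-nodes) is slightly more careful than the paper's presentation, but the argument is the same.
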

\begin{proof}
Let $s$ be the root of the $PQF$-subtree $S$. We describe the algorithm by induction on the depth of $s$. Assume the depth of $s$ is $0$, that is if $s$ is the root of $T$. If $S = T$, then we simply set $T':= T$. Otherwise, first observe that $s$ is not a $P$-node since it would imply that $S = T$ by definition of $PQF$-trees. Let $s_1,\ldots,s_k$ be the children of $s$ and $1 \leq i < j \leq k$ such that $S$ is the $(i,j)$-$PQF$-subtree rooted in $s$.

If $s$ is an $F$-node, then $\calF(S) = \calF(T_i) \ldots \calF(T_j)$ and $\calF(T) = \calF(T_1) \ldots \calF(T_k)$. Thus if $j < k$, then $\{\ell_1\ell_2 \in \calF(T) \mid \ell_2 \in \calF(S) \}$ is empty since every $\ell \in \calF(T)$ is of the form $\ell'\ell_k$ with $\ell_k \in \calF(T_k)$ that is disjoint from the leaves in $S$. Thus the algorithm rejects in this case. However, if $j = k$ then $\calF(S) = \calF(T_i) \ldots \calF(T_k)$ and then for all $\ell \in \calF(T)$, $\ell = \ell_1\ell_2$ with $\ell_2 \in \calF(S)$ (and $\ell_1 \in \calF(T_1) \ldots \calF(T_{i-1})$). Consequently, $T$ is the desired $PQF$-tree.

If $s$ is a $Q$-node, $\calF(S) = \calF(T_i) \ldots \calF(T_j) \cup \calF(T_j) \ldots \calF(T_i)$ and $\calF(T) = \calF(T_1) \ldots \calF(T_k) \cup \calF(T_k) \ldots \calF(T_1)$. Thus, for the same reasons as in the last case, if $i \neq 1$ and $j \neq k$, there is no list in $\calF(T)$ of the form $\ell_1\ell_2$ with $\ell_2 \in \calF(S)$ and the algorithm rejects.

Suppose that $j = k$. Let $T'$ be the $PQF$-tree $T$ where we replace $s$ with an $F$-node. We have $\calF(T') = \calF(T_1) \ldots \calF(T_k)$. Just as before, we have $\calF(T') \subseteq \{\ell_1\ell_2 \in \calF(T) \mid \ell_2 \in \calF(S) \}$. For the other inclusion, as $S \neq T$, we also have $1 < i$. Let $\ell_1\ell_2 \in \calF(T)$ such that $\ell_2 \in \calF(S)$. Thus $\ell_1 \in \calF(T_1) \ldots \calF(T_{i-1})$ is not empty and thus $\ell_1\ell_2 \in \calF(T')$.

Now if $i = 1$, we reverse the children of $s$. This does not change $\calF(T)$ and this case reduces to the previous one.

Now suppose that the depth of $s$ is $d+1$. Let $r$ be the root of $T$ and $t$ the child of $r$ such that the subtree $T_t$ rooted in $t$ contains $S$. In $T_t$, the depth of $s$ is $d$. We recursively apply our algorithm on $(T_t,S)$ to get a new $PQF$-tree $T_t'$ such that $\calF(T_t') = \{\ell_1\ell_2 \in \calF(T_t) \mid \ell_2 \in \calF(S) \}$. We claim that if we apply apply the transformation shown in the Figure~\ref{fig:transf} to the root $r$ we get $T'$ with the desired properties.

If the root $r$ is a $Q$-node, for the same reasons as in the cases for depth $0$, if $t$ is not the rightmost or the leftmost child of $r$, then the transformation is impossible and the algorithm rejects. If not, we force the subtree $T_t'$ on the right of the $PQF$-tree and the same reasoning as before will give the desired result.

If $r$ is an $F$-node, the transformation is essentially the same as for $Q$-nodes except that we need $t$ to be the rightmost child of $r$, since we cannot reverse the children here.

Let now $r$ be a $P$-node. First observe that by induction hypothesis $\calF(T_t') = P \calF(S)$ for some set $P$. Thus the set $\{\ell_1\ell_2 \in \calF(T) \mid \ell_2 \in \calF(S) \}$ is obtained when we permute the children of $\calP$ and bring $\calT_t'$ on the right side. Thus we can apply any permutation on the $k-1$ other children and let $t$ on the right, which is exactly what the first transformation does.

Finally we see that we perform at most one change on each vertex lying between $r$ and $s$, so the construction can easily be done in polynomial time.
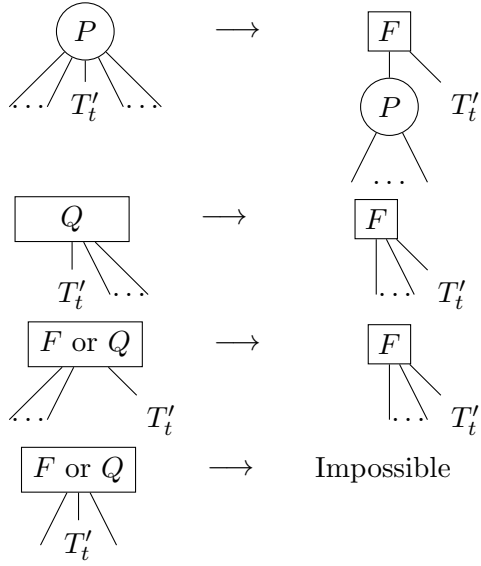
\begin{figure}
  \centering
    \begin{tikzpicture}
    \node[draw,circle] (1) at (0,0) {$P$};
    \node (2) at (0,-1) {$T_t'$};
    \draw (1) -- (2); 
    \draw (1) -- (0.5,-1);
    \node (3) at (0.75,-1) {$\ldots$};
    \draw (1) -- (1,-1);  
    \draw (1) -- (-0.5,-1);
    \node (3) at (-0.75,-1) {$\ldots$};
    \draw (1) -- (-1,-1);

    \node (0) at (2,0) {$\longrightarrow$};

    \node[draw,rectangle] (4) at (4,0) {$F$};
    \node (5) at (5,-1) {$T_t'$};
    \node[draw,circle] (6) at (4,-1) {$P$};
    \draw (4) -- (5);
    \draw (4) -- (6);
    \draw (6) -- (3.5,-2);
    \node (7) at (4,-2) {$\ldots$};
    \draw (6) -- (4.5,-2);
  \end{tikzpicture}

  \begin{tikzpicture}
    \node[draw,rectangle,minimum width=1.5cm] (1) at (0,0) {$Q$};
    \node (2) at (0,-1) {$T_t'$};
    \draw (1) -- (2); 
    \draw (1) -- (0.5,-1);
    \node (3) at (0.75,-1) {$\ldots$};
    \draw (1) -- (1,-1);  

    \node (0) at (2,0) {$\longrightarrow$};

    \node[draw,rectangle] (4) at (4,0) {$F$};
    \node (5) at (5,-1) {$T_t'$};
    \draw (5) -- (4);
    \draw (4) -- (4,-1);
    \node (6) at (4.25,-1) {$\ldots$}; 
    \draw (4) -- (4.5,-1);
  \end{tikzpicture}

  \begin{tikzpicture}
    \node[draw,rectangle,minimum width=1.5cm] (1) at (0,0) {$F$ or $Q$};
    \node (2) at (1,-1) {$T_t'$};
    \draw (1) -- (2); 
    \draw (1) -- (-0.5,-1);
    \node (3) at (-0.75,-1) {$\ldots$};
    \draw (1) -- (-1,-1);  

    \node (0) at (2,0) {$\longrightarrow$};

    \node[draw,rectangle] (4) at (4,0) {$F$};
    \node (5) at (5,-1) {$T_t'$};
    \draw (5) -- (4);
    \draw (4) -- (4,-1);
    \node (6) at (4.25,-1) {$\ldots$}; 
    \draw (4) -- (4.5,-1);
  \end{tikzpicture}

  \begin{tikzpicture}
    \node[draw,rectangle,minimum width=1.5cm] (1) at (0,0) {$F$ or $Q$};
    \node (2) at (0,-1) {$T_t'$};
    \draw (1) -- (2); 
    \draw (1) -- (0.5,-1);
    \draw (1) -- (-0.5,-1);  
    \node (0) at (2,0) {$\longrightarrow$};
    \node (4) at (4,0) {Impossible};
  \end{tikzpicture}
  \caption{The transformations for the \texttt{Force} procedures}
  \label{fig:transf}
\end{figure}
\end{proof}

\begin{corollary}
\label{cor:separator}
Let $\calH = (V,E)$ be a hypergraph and $T$ a consistent $PQF$-tree for $\calH$. Let $V' \subseteq V$ and $A = \{e \in E \mid e \cap V' \neq \emptyset \}$. Suppose that for all $e,f \in A$, $e \cap V' \subseteq f \cap V'$ or $f \cap V' \subseteq e \cap V'$. Then we can compute in polynomial time a $PQF$-tree $T'$ such that $\calF(T') = \{e_1 \ldots e_m \in \calF(T) \mid \forall i < j, e_i,e_j \in A \Rightarrow e_i \cap V' \subseteq e_j \cap V' \}$.
\end{corollary}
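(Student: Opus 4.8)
The plan is to reduce the global filtering problem to sorting the edges of $A$ inside a single subtree, and then to carry out that sorting with a bounded number of calls to \texttt{Force}. First I would record the combinatorial shape of $A$. By hypothesis the sets $\{e\cap V'\mid e\in A\}$ form a chain, so I list their distinct values as $W_1\subsetneq\cdots\subsetneq W_p$ and set $A_s=\{e\in E\mid W_s\subseteq e\}$, so that $A=A_1\supseteq A_2\supseteq\cdots\supseteq A_p$ and $e$ has \emph{level} $s$ exactly when $e\cap V'=W_s$. Picking any $v\in W_1$, every $e\in A$ satisfies $v\in e$ while every $e\notin A$ satisfies $v\notin e$; hence $A=\{e\mid v\in e\}$ and Lemma~\ref{lem:pqfsub}, applied to the singleton $\{v\}$, yields a $PQF$-subtree $S_A$ of $T$ whose leaves are exactly $A$. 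More generally, choosing $v_s\in W_s\setminus W_{s-1}$ gives $A_s=\{e\mid v_s\in e\}$, so Lemma~\ref{lem:pqfsub} also produces $PQF$-subtrees with leaf set $A_s$ inside any consistent $PQF$-tree on these edges.

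The key structural fact I would then establish is that in \emph{every} $\ell\in\calF(T)$ the edges of $A$ occur as a consecutive block whose left-to-right order is some element of $\calF(S_A)$ read \emph{forward}. Consecutiveness is immediate, since the leaves below the root of $S_A$ form a contiguous range under $P$-, $Q$- and $F$-nodes alike. Forwardness is the point that needs care: an ancestor $Q$-node reverses only the \emph{order} of its children's blocks and not the internal content of the child-subtree containing $S_A$, so no ancestor ever reverses the block; and when a reversal does occur at the root of $S_A$ itself, both orientations already belong to $\calF(S_A)$. Granting this, the target set is exactly the set of frontiers of $T$ whose $A$-block lies in $\calF(S_A)\cap\mathrm{Sorted}$, where $\mathrm{Sorted}$ denotes the lists in which levels are nondecreasing from left to right. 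Consequently, if I build a $PQF$-tree $S_A'$ on leaf set $A$ with $\calF(S_A')=\calF(S_A)\cap\mathrm{Sorted}$ and substitute it for $S_A$ inside $T$, the inductive definition of frontiers gives a $PQF$-tree $T'$ with the required frontier.

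It remains to construct $S_A'$, and here I would use \texttt{Force}. Working inside $S_A$ as a standalone consistent $PQF$-tree, I observe that ``levels nondecreasing'' is equivalent to ``$A_s$ is a suffix of the block for every $s$'', because $A_s$ is precisely the set of edges of level $\ge s$ and the sets $A_p\subseteq A_{p-1}\subseteq\cdots\subseteq A_2$ are nested. I would therefore set $S^{(0)}=S_A$ and, for $s=p,p-1,\ldots,2$, locate by Lemma~\ref{lem:pqfsub} applied to $S^{(p-s)}$ a $PQF$-subtree $S_s$ with leaf set $A_s$, and let $S^{(p-s+1)}$ be the tree obtained by applying \texttt{Force} to $S^{(p-s)}$ and $S_s$. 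Since \texttt{Force} retains exactly the frontiers that end in a frontier of $S_s$, and the forced sets are nested, the final tree $S_A'=S^{(p-1)}$ satisfies $\calF(S_A')=\{\beta\in\calF(S_A)\mid A_s\text{ is a suffix of }\beta\text{ for all }s\}=\calF(S_A)\cap\mathrm{Sorted}$. Each step runs in polynomial time by Lemma~\ref{lem:pqfsub} and Proposition~\ref{prop:force}, and there are at most $p\le|E|$ steps, so the construction is polynomial overall; if some call to \texttt{Force} rejects, then the target set is empty and we reject as well.

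I expect the main obstacle to be precisely the combination of the forward-orientation claim with the substitution step, that is, verifying rigorously that filtering $\calF(S_A)$ to its sorted sublists filters $\calF(T)$ to the target. This is where the slightly unusual semantics of $Q$-nodes (reversing child order but not child content) must be handled correctly, and where the edge cases of $PQF$-subtrees rooted at a sub-range of a $Q$- or $F$-node's children have to be checked so that the substitution indeed yields a valid $PQF$-tree in normal form.
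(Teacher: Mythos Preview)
Your approach is correct but more laborious than the paper's. The key simplification you miss is that a \emph{single} call to \texttt{Force} suffices. The paper only forces the edges of maximal $V'$-trace --- your $A_p$, written there as $A_2 = \{e \in A \mid V_2 \subseteq e\}$ where $V_2$ is the largest set in the chain --- to the end of the $A$-block. Once that is done, consistency sorts the rest automatically: if $e$ precedes $f$ in the resulting block with $e \cap V' \not\subseteq f \cap V'$, pick $v' \in (e\cap V')\setminus(f\cap V')$ and any $g \in A_2$ (which now sits after $f$); then $v' \in e$, $v' \notin f$, and $v' \in g$ since $e\cap V' \subseteq V_2 \subseteq g$, contradicting the connectedness of $v'$ along the list. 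So your loop over $s = p, p-1, \ldots, 2$ explicitly enforces what the join-path property already guarantees after the first iteration. Both versions are polynomial, but the paper's saves $p-2$ calls to \texttt{Force} and the attendant renormalisations and subtree relocations between iterations.

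As for the obstacle you flag --- the substitution of $S_A'$ for $S_A$ inside $T$ and its interaction with $Q$-node orientation --- the paper proceeds in exactly the same way and at the same level of informality (``let $T'$ be the tree where we replace $S$ by $S'$ in $T$''), so your caution is well placed but is not a point of divergence from the paper's argument.
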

\begin{proof}
We want to compute $T'$ such that the frontiers of $T'$ are the frontiers of $T$ in which the edges of $A$ appear in increasing order with respect to inclusion relative to $V'$. The set $\{e \cap V' \mid e \in A\}$ is ordered by inclusion, thus it has a smallest element $V_1$ and a biggest element $V_2$. Obviously, $V_1 \subseteq V_2 \subseteq V'$. Furthermore, $A= \{e\in E\mid V_1\in e\}$ and $A_2 = \{e \in A \mid  V_2 \subseteq e \}$ is not empty. 

First use Lemma~\ref{lem:pqfsub} to find a $PQF$-subtree $S$ of $T$ whose leaves are exactly $\{ e \mid V_1 \subseteq e\} = A$. Then use Lemma~\ref{lem:pqfsub} again to find a $PQF$-subtree $R$ of $S$ whose leaves are exactly $A_2$. Now use the procedure $\texttt{Force}(S,R)$ to compute $S'$ as in Proposition~\ref{prop:force} and let $T'$ be the tree where we replace $S$ by $S'$ in $T$. As finding the right subtrees in $T$ can easily  be done in polynomial time by finding a least common ancestor and $\texttt{Force}$ is a polynomial time, it is clear that one can compute $T'$ as well. We now show that $T'$ has the desired properties.

To this end, let $\ell \in \calF(T')$. By definition of $T'$, we have that $\ell$ is also in $\calF(T)$. In addition, $\ell$ is of the form $\ell_1 \ell_A \ell_2$ with $\ell_A \in \calF(S')$. By definition of $\texttt{Force}$, $\ell_A$ is of the form $\sigma_1\sigma_2$ with $\sigma_2 \in \calF(R)$, that is, consisting only of the edges in $A_2$, which are maximal for the inclusion. Let $g \in A_2$. Let $e,f \in A$ with $e$ appearing before $f$ in $\ell_A$. If $e \cap V'$ is not included in $f \cap V'$, then there exists a $v' \in V_2$ such that $v' \in e$, $v' \notin f$ and $v' \in g$ since $V_2 \subseteq g$. That would lead to an inconsistent list which is a contradiction.

Reciprocally, let $\ell \in \{e_1 \ldots e_m \in \calF(T) \mid \forall i < j, e_i,e_j \in A \Rightarrow e_i \cap V' \subseteq e_j \cap V' \}$. $\ell$ is of the form $\ell_1 \ell_A \ell_2$ with $\ell_A \in \calF(S)$. As it is organized by inclusion relative to $V'$, the elements of $A_2$ should all lie at the end of $\ell_A$. Thus $\ell_A \in \{\sigma_1\sigma_2 \mid \sigma_2 \in \calF(R)\} = \calF(S')$ by Proposition~\ref{prop:force}. It follows that $\ell \in \calF(T')$.
\end{proof}

\subsection{Db-rootable hypergraphs}

In this section we will prove several structural properties of hypergraphs with disjoint branches decompositions which we will use in the algorithm in the next section.

\begin{definition}
Let $\calH=(V,E)$ be a hypergraph. For $e \in E$, we say that $\calH$ is {\em db-rootable} in $e$ if there exists a disjoint branches decomposition of $\calH$ rooted in $e$.
\end{definition}

The algorithm for the construction of disjoint branches decompositions will delete edges of hypergraphs. To this end we introduce the following notation.

\begin{definition}
For a hypergraph $\calH=(V,E)$ and an edge $e \in E$, we denote $H \setminus e$ the hypergraph $(V_e, E \setminus \{e\})$ where $V_e:=\bigcup_{e'\in E\setminus \{e\}} e'$.
For a set $A = \{e_1, \ldots, e_k\} \subseteq E$, we define $\calH \setminus A$ to be the hypergraph $((\calH \setminus e_1) \setminus \ldots )\setminus e_k$.
\end{definition}

We make the following observation which will simplify our arguments later.

\begin{observation}\label{obs:component}
 Let $\calH=(V,E)$ be a hypergraph and $e\in E$. Then $\calH$ is db-rootable in~$e$ if and only if for every connected component $C=(V_C,E_C)$ of $\calH\setminus e$ the hypergraph $C':=(V_C\cup e, E_C\cup\{e\})$ is db-rootable in $e$.
\end{observation}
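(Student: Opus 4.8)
The plan is to prove both directions of the equivalence in Observation~\ref{obs:component} directly from the definition of disjoint branches decompositions, exploiting the fact that in a join tree rooted at $e$, the edge $e$ separates the remaining edges into groups corresponding to the connected components of $\calH\setminus e$. The key structural insight is that any hyperedge $f\ne e$ lies in exactly one connected component $C$ of $\calH\setminus e$, and that the subtree hanging below $e$ decomposes naturally along these components.

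For the forward direction, suppose $\calH$ is db-rootable in $e$, so there is a disjoint branches decomposition $(\calT,\lambda)$ rooted at the node carrying $e$. First I would argue that for each connected component $C=(V_C,E_C)$ of $\calH\setminus e$, the nodes of $\calT$ labeled by edges of $E_C$ together with the root $e$ induce a connected subtree $\calT_C$ of $\calT$: any two edges in the same component are joined by a path of pairwise-intersecting edges, and by the connectedness condition of join trees the corresponding nodes cannot be separated in $\calT$ by a node outside $E_C\cup\{e\}$, since such a separating node would have to share a vertex with both sides, forcing it into $C$. Conversely, edges from different components share no vertex, so their node-sets occupy disjoint branches below $e$. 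Then $(\calT_C,\lambda|_{\calT_C})$ is a join tree for $C'=(V_C\cup e,E_C\cup\{e\})$ rooted at $e$, and it inherits the disjoint-branches property as a restriction of $(\calT,\lambda)$; hence each $C'$ is db-rootable in $e$.

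For the converse, suppose each $C'$ is db-rootable in $e$ via a decomposition $(\calT_C,\lambda_C)$ rooted at $e$. The idea is to glue these trees by identifying their common root node $e$: form $\calT$ by taking the disjoint union of the $\calT_C\setminus\{e\}$ as subtrees all attached to a single shared root node labeled $e$. This is clearly a tree with a bijection to $E$. I would check the connectedness condition: a vertex $v\in e$ may appear in several components' edges, but in each $\calT_C$ the $v$-carrying nodes form a connected subtree containing the root $e$, so their union across components remains connected through the shared root; a vertex $v\notin e$ belongs to edges of a single component $C$ only (since components partition $V\setminus e$ in the relevant sense), so its node-set is already connected within $\calT_C$. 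The disjoint-branches condition holds because two nodes on different branches of $\calT$ either lie in the same $\calT_C$ (where it holds by hypothesis) or in different components $C\ne C'$ (where the corresponding edges share no vertex at all, as distinct components are vertex-disjoint off $e$, and any shared vertex would lie in $e$, but then by connectedness both nodes would be on the branch through $e$, not on different branches).

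The main obstacle I anticipate is the careful bookkeeping of vertices lying in $e$ itself when verifying the connectedness and disjointness conditions across the glued components: such a vertex $v\in e$ can occur in multiple components, so the argument must lean precisely on the fact that in each component's decomposition the $v$-nodes form a subtree \emph{containing the root}, so that after gluing all these subtrees meet at the common root $e$ and remain connected, and correspondingly two $v$-carrying nodes in different components are never on different branches because their connecting path passes through $e$. Making this ``the shared vertices are exactly those in $e$'' claim rigorous—and confirming that deleting $e$ genuinely partitions the remaining edges into vertex-disjoint-modulo-$e$ components—is the delicate point; everything else is a routine verification of the join-tree axioms for the glued tree.
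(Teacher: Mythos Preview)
The paper states this as an observation without proof, so there is no argument to compare against. Your overall strategy is the natural one, and your converse direction is correct. One simplification: a vertex $v\in e$ can lie in edges of at most \emph{one} component of $\calH\setminus e$, since any two edges of $\calH\setminus e$ sharing $v$ are in the same component. So your ``main obstacle'' about $v$ occurring in several components' edges does not arise; this only makes your gluing argument cleaner.

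There is, however, a genuine gap in your forward direction. You claim that $E_C\cup\{e\}$ induces a \emph{connected} subtree of $\calT$, arguing that any separating node ``would have to share a vertex with both sides, forcing it into $C$.'' Your argument does show that $E_C$ itself is connected (and convex) in $\calT$: the tree-path between two hypergraph-adjacent edges of $E_C$ consists of edges containing their common vertex, hence lies in $E_C$. But it does \emph{not} show that $e$ is adjacent to this $E_C$-subtree. A node $h$ on the tree-path from $e$ to the root $r_C$ of the $E_C$-subtree need not share any vertex with either, since $e\cap r_C$ may be empty. Concretely: take $e=\{1,2\}$, $a=\{2,3\}$, $b=\{3,4\}$, $c=\{5\}$, with $\calT$ rooted at $e$, child $a$, and $a$ having children $b,c$. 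This is a disjoint branches decomposition; the component of $c$ in $\calH\setminus e$ has $E_C=\{c\}$, and $\{e,c\}$ is not connected in $\calT$ (they are separated by $a$).

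The fix is easy: do not take the induced subtree of $\calT$. Instead take the connected subtree on $E_C$ alone and attach $e$ as a fresh root directly above $r_C$. To verify the connectedness condition for the resulting tree you need that every $v\in e\cap V_C$ lies in $r_C$; this follows because any such $v$ lies in some $f\in E_C$, the tree-path in $\calT$ from $e$ to $f$ passes through $r_C$, and by the join-tree connectedness condition $v$ lies in every edge on that path.
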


The next lemma shows that deleting the root of a disjont branches decomposition decomposes a hypergraph along the components of the resulting join forest.

\begin{lemma}
\label{lem:cc}
Let $\calH$ be a hypergraph $\calH$ with a disjoint branches decomposition $\calT$ that is rooted in $e$. Let and $v_1,v_2$ be two vertices that appear in different trees $\calT_1$ and $\calT_2$ of the forest $\calT \setminus \{e\}$. Then $v_1$ and $v_2$ lie in different connected components of $\calH \setminus e$.
\end{lemma}
\begin{proof}
Suppose that there is a path $e_1, \ldots, e_k$ from $v_1$ to $v_2$ in $\calH \setminus e$. We show by induction on $i$, that $e_i$ is in $\calT_1$. It is true for $i=1$ because $v_1 \in e_1$, the disjointness of $\calT$ prevents $e_1$ from being in $\calT_2$. Now, assume that $e_i \in \calT_1$. As $e_i \cap e_{i+1} \neq \emptyset$, it means that $e_i$ and $e_{i+1}$ are in a common branch of $\calT$. If $e_{i}$ is an ancestor of $e_{i+1}$ in $\calT$, then by induction, $e_{i+1}$ is in $\calT_1$. If $e_{i+1}$ is an ancestor of $e_i$ in $\calT$, then either $e_{i+1}$ is in $\calT_1$ either $e_{i+1} = e$. However, $e_{i+1} \neq e$, because it is an edge of $\calH \setminus e$, so $e_{i+1} \in \calT_1$. 

Thus $e_k \in \calT_1$. However, it contradicts the disjointness of $\calT$ since $v_2 \in e_k$. Thus there is no path from $v_1$ to $v_2$ in $\calH \setminus e$: they are in two different connected components.
\end{proof}

Finally, we make one last observation on the roots of disjoint branches decompositions.

\begin{lemma}
\label{lem:cover}
If $\calH$ is db-rootable in $e$ and $\calH \setminus e = (V',E')$ has one connected component then there exists $e' \in E'$ such that $e \cap V' \subseteq e'$.
\end{lemma}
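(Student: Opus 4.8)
The plan is to take a disjoint branches decomposition $\calT$ of $\calH$ rooted in $e$ and exploit the connectedness condition together with the fact that $\calH \setminus e$ has a single connected component. First I would consider the children of the root $e$ in $\calT$, say the nodes labelled by edges $f_1, \ldots, f_k$. By Lemma~\ref{lem:cc}, each subtree $\calT_i$ rooted in $f_i$ corresponds to a distinct connected component of $\calH \setminus e$; since we assume $\calH \setminus e$ is connected, there can be only \emph{one} such subtree, i.e.\ $e$ has exactly one child $f$ in $\calT$. This is the structural reduction that makes the statement tractable: the whole of $\calH \setminus e$ lives in the single subtree $\calT_f$.

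Next I would argue that the unique child edge $f = \lambda(f)$ already covers $e \cap V'$, which gives the desired $e' := f$. The key is the connectedness condition applied to each vertex $v \in e \cap V'$. Since $v \in V' = \bigcup_{e' \in E'} e'$, the vertex $v$ appears in at least one edge of $\calH \setminus e$, hence in at least one node of the subtree $\calT_f$. The set of nodes of $\calT$ containing $v$ is a connected subtree, and it contains the root $e$ (because $v \in e$) together with at least one node strictly below $e$, which must lie in $\calT_f$. Because the connected subtree containing $v$ includes both $e$ and a node inside $\calT_f$, and the only node on the path from $e$ into $\calT_f$ is $f$ itself, connectedness forces $v \in f$. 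Running this over all $v \in e \cap V'$ yields $e \cap V' \subseteq f$, so $e' = f \in E'$ is the required edge.

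The main obstacle I anticipate is handling the edge-deletion bookkeeping cleanly: when forming $\calH \setminus e$ one restricts to the vertex set $V' = \bigcup_{e'\neq e} e'$, so one must be careful that $f$, viewed as an edge of $\calH \setminus e$, is literally an element of $E'$ and that the inclusion $e \cap V' \subseteq f$ is between the right sets. Here $f$ is unchanged by the deletion of $e$ (deletion only drops $e$ and trims isolated vertices, not the contents of other edges), so $f \in E'$ directly. A secondary subtlety is the degenerate case where $e$ has no children at all, i.e.\ $\calT$ is a single node; but then $\calH \setminus e$ has empty vertex set and no edges, contradicting the hypothesis that it has a connected component, so this case does not arise and $e$ indeed has the unique child $f$.

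Once these points are settled the proof is short. I would not expect to need the full strength of disjoint branches beyond Lemma~\ref{lem:cc}: the single-component hypothesis collapses the branching, and the rest is a direct appeal to the connectedness condition of the join tree. The only genuine content is recognizing that connectivity of $\calH \setminus e$ forces a unique child, and that this child must contain every vertex of $e$ that survives the deletion.
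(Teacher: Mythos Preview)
Your proposal is correct and follows essentially the same approach as the paper: use Lemma~\ref{lem:cc} to deduce that the root $e$ has a unique child $e'$ in the disjoint branches decomposition, then apply the connectedness condition to conclude $e\cap V'\subseteq e'$. The paper phrases the second step as the contrapositive (any $v\in e\setminus e'$ appears only in $e$, hence $v\notin V'$), but this is the same argument as yours.
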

\begin{proof}
By Lemma~\ref{lem:cc}, in a disjoint branches decomposition of $\calH$ rooted in $e$, the edge $e$ has only one child $e'$. Thus, by connectedness, the vertices in $e$ that are not in $e'$ are only in $e$, thus they are not in $V'$ by definition. So $e \cap V' \subseteq e'$.
\end{proof}

\subsection{Separators}

By Observation~\ref{obs:component} we may deal with the components of $\calH\setminus e$ for a hypergraph $\calH$ and an edge $e$ independently. Thus we will in this section always assume that $\calH\setminus e$ just has a single component. We will consider restricted join paths that we call $A$-separators. In the following, all join paths will be denoted as ordered lists of edges, which corresponds to the notation in Section~\ref{subsec:pqf}.

\begin{definition}\label{def:separator}
Let $\calH=(V,E)$ be a hypergraph and let $\calP = a_1 \ldots a_m$ be a join path of $A \subseteq E$. We call $\calP$ an {\em $A$-separator} of $\calH$ if for all connected components $C = (V_C,E_C)$ of $\calH \setminus A$ we have that if $a_j \cap V_C \neq \emptyset$, then for all $i \leq j$, $a_i \cap V_C \subseteq a_j \cap V_C$.
\end{definition}

\begin{theorem}\label{thm:computeseparator}
There is a polynomial time algorithm \emph{$\texttt{ComputeSeparator}(\calH, A)$} that, given a hypergraph $\calH=(V,E)$ and a set $A \subseteq E$, computes an $A$-separator of $\calH$ if it exists and rejects otherwise.
\end{theorem}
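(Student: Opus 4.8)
The plan is to compute an $A$-separator by filtering the consistent lists of the subhypergraph induced by $A$ using the $PQF$-tree machinery developed in Section~\ref{subsec:pqf}. First I would observe that an $A$-separator is, by Definition~\ref{def:separator}, a join path of $A$ satisfying an ordering constraint relative to the connected components of $\calH \setminus A$. So the natural starting point is to run the algorithm of Theorem~\ref{th:boothluecker} on the subhypergraph $\calH[A]$ induced by the edge set $A$ to obtain a $PQ$-tree $T_0$ whose frontiers $\calF(T_0)$ are exactly the consistent lists of $A$, i.e.\ the join paths of $A$. If $\calH[A]$ has no consistent list at all, then no $A$-separator exists and the algorithm rejects immediately.

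Next I would enumerate the connected components $C_1, \ldots, C_p$ of $\calH \setminus A$. For each component $C = (V_C, E_C)$, the separator condition says: among the edges $a$ of $A$ meeting $V_C$, those appearing earlier in the join path must have their trace on $V_C$ contained in the trace of later ones; in other words, setting $A_C = \{a \in A \mid a \cap V_C \neq \emptyset\}$, the edges of $A_C$ must appear in increasing order with respect to inclusion of $a \cap V_C$. This is exactly the shape of constraint that Corollary~\ref{cor:separator} enforces, with $V' := V_C$. The one hypothesis I must verify before applying it is that the traces $\{a \cap V_C \mid a \in A_C\}$ are linearly ordered by inclusion. I expect this to be where the real work lies, so I flag it as the main obstacle below. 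Granting it, I would apply Corollary~\ref{cor:separator} iteratively: starting from $T_0$, process the components one at a time, at each step replacing the current $PQF$-tree $T_{i-1}$ by the tree $T_i$ whose frontiers are the lists of $\calF(T_{i-1})$ that additionally respect the inclusion-ordering constraint imposed by $V_{C_i}$. Since each application runs in polynomial time and there are at most $|V|$ components, the whole filtering runs in polynomial time. At the end, $\calF(T_p)$ is precisely the set of $A$-separators; I output any frontier of $T_p$ (readable directly from the tree), and reject if the filtering ever empties the frontier set.

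The main obstacle is justifying that Corollary~\ref{cor:separator} is applicable at each step, i.e.\ that for a fixed component $C$ the traces of the edges of $A_C$ on $V_C$ form a chain under inclusion. The intended argument is a connectedness/consistency one: suppose two edges $e, f \in A_C$ had incomparable traces on $V_C$, so there are $u \in (e \setminus f) \cap V_C$ and $w \in (f \setminus e) \cap V_C$. Since $u, w$ both lie in the single component $C$ of $\calH \setminus A$, there is a path connecting them using only edges outside $A$; I would combine this path with $e$ and $f$ to exhibit a configuration that contradicts the existence of any consistent list of $A$ (equivalently, that forces incomparable traces to violate the connectedness condition of the join path we are trying to build). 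A clean way to package this is: if the chain condition fails for some component, then in fact \emph{no} $A$-separator can exist, so rejecting is correct; hence I would not even need the chain property to hold unconditionally, only to detect its failure and reject, which preserves correctness. I would therefore structure the proof so that the chain hypothesis of Corollary~\ref{cor:separator} is checked explicitly for each component, and failure triggers a (correct) rejection.

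A second, more routine point to handle carefully is bookkeeping between $\calH$ and its sub-objects: Corollary~\ref{cor:separator} is stated for a single hypergraph and a single $V'$, so I must be sure that filtering for component $C_{i}$ does not disturb the ordering already imposed for $C_{1}, \ldots, C_{i-1}$. This follows because each application only further restricts $\calF$, never relaxes it, so the constraints accumulate monotonically and the final frontier set satisfies all component conditions simultaneously. With correctness of each filtering step inherited from Corollary~\ref{cor:separator}, the equality of the final frontier set with the set of $A$-separators is then immediate from Definition~\ref{def:separator}, and the polynomial running time follows from the polynomial bounds on Theorem~\ref{th:boothluecker}, Corollary~\ref{cor:separator}, and the linear number of components.
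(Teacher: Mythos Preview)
Your proposal is correct and matches the paper's proof essentially step for step: compute the $PQ$-tree for $A$ via Theorem~\ref{th:boothluecker}, then for each component of $\calH\setminus A$ check the chain condition on traces (rejecting if it fails) and filter with Corollary~\ref{cor:separator}. The only remark is that your ``main obstacle'' is simpler than you initially frame it---incomparable traces $e\cap V_C$, $f\cap V_C$ immediately rule out any $A$-separator straight from Definition~\ref{def:separator} (one of $e,f$ must precede the other, forcing containment), so no connectedness argument through $C$ is needed; you arrive at exactly this clean packaging yourself by the end.
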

\begin{proof}
We will iterate the algorithm described in Corollary~\ref{cor:separator}. We first compute a $PQ$-tree $T_0$ such that $\calF(T_0)$ is the set of all join paths for $A$ using Theorem~\ref{th:boothluecker}. Then, for each component $C=(V_C, E_C)$ of $\calH \setminus A$, we iteratively do the following: If there are edges $a_i$ and $a_j$ such that $a_i \cap V_C \nsubseteq a_j \cap V_C$ and $a_j \cap V_C \nsubseteq a_i \cap V_C$, then $\calH$ cannot have an $A$-separator and we reject. Otherwise, we can use the algorithm of Corollary~\ref{cor:separator} on the $PQF$-tree we have computed so far to construct a $PQF$-tree whose frontiers respect the order condition for the edges imposed by $C$ or rejects. 

An easy induction shows that if this algorithm does not reject at any point, then the computed $PQF$-tree $T$ has as frontiers all join paths that satisfy the order conditions of Definition~\ref{def:separator}. Thus we can choose one of these join paths arbitrarily as the desired $A$-separator.
\end{proof}

We will need $A$-separators with additional conditions.

\begin{definition}
Let $\calH=(V,E)$ be a hypergraph and let $A\subseteq E$. We call an $A$-separator $\calP = a_1 \ldots a_m$ of $\calH$ a {\em strong $A$-separator} if for all connected components $C = (V_C,E_C)$ of $\calH \setminus A$, $C' = (V_C, E_C \cup \{a_{l_C}\})$ is db-rootable in $a_{l_C}$ where $l_C = \max \{i \mid a_i \cap V_C \neq \emptyset \}$.
\end{definition}

As we will see, the existence of a strong $A$-separator for $\calH \setminus e$ and a certain set $A$ is equivalent to the property of being db-rootable in $e$. Thus we are interested in computing strong $A$-separators instead of arbitrary $A$-separators. Fortunately, it turns out that if there exists a strong $A$-separator, then all $A$-separators are strong and thus the algorithm of Theorem~\ref{thm:computeseparator} suffices.

\begin{proposition}
\label{prop:strongall}
Let $\calH=(V,E)$ be a hypergraph and $A\subseteq E$. If there exists a strong $A$-separator of $\calH$ then all $A$-separators of $\calH$ are strong.
\end{proposition}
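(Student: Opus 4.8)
The plan is to show that the hypergraph $C'$ whose db-rootability is tested in the definition of a strong $A$-separator does not actually depend on the choice of $A$-separator, but only on $\calH$, $A$, and the component $C$ of $\calH\setminus A$. Once this is established the statement is immediate: being strong means that $C'$ is db-rootable in its distinguished edge for \emph{every} component $C$ of $\calH\setminus A$, so if this holds for one $A$-separator it holds for all of them.

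First I would fix a connected component $C=(V_C,E_C)$ of $\calH\setminus A$ and consider the family $A_C = \{a\in A \mid a\cap V_C\neq\emptyset\}$, which clearly depends only on $A$ and $C$. For any $A$-separator $\calP=a_1\ldots a_m$, the separator condition of Definition~\ref{def:separator} forces the traces $\{a\cap V_C \mid a\in A_C\}$ to form a chain under inclusion: whenever $a_j\cap V_C\neq\emptyset$, every earlier $a_i$ satisfies $a_i\cap V_C\subseteq a_j\cap V_C$. Hence this finite family of traces is totally ordered by inclusion and therefore has a unique maximal element $W_C:=\bigcup_{a\in A_C}(a\cap V_C)$, which again depends only on $\calH$, $A$ and $C$, not on the ordering of $\calP$. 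I would then observe that the index $l_C=\max\{i\mid a_i\cap V_C\neq\emptyset\}$ selects an edge $a_{l_C}$ whose trace $a_{l_C}\cap V_C$ equals this maximum $W_C$, since every later edge misses $V_C$ and every earlier edge has a smaller trace.

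The key step---and the only real content---is to conclude that the edge adjoined to $C$ is the same set for every $A$-separator. Using the subhypergraph convention that edges are restricted to the vertex set, the edge $a_{l_C}$ occurring in $C'=(V_C,E_C\cup\{a_{l_C}\})$ is really $a_{l_C}\cap V_C = W_C$, which I have just shown is independent of $\calP$. Therefore $C'=(V_C,E_C\cup\{W_C\})$ is one fixed hypergraph with one fixed distinguished edge $W_C$, so the property ``$C'$ is db-rootable in $W_C$'' has the same truth value across all $A$-separators. Applying this to every component $C$ simultaneously gives the claim: if some $A$-separator is strong, then $C'$ is db-rootable in $W_C$ for each $C$, and hence every $A$-separator is strong. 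The main obstacle is purely conceptual---recognizing that the separator condition pins down the maximal trace $W_C$ uniquely---after which no computation is required.
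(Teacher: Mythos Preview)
Your argument is correct and takes a cleaner route than the paper's. The paper argues by an explicit root-swap: given the strong separator's edge $e_C$ together with a disjoint branches decomposition $\calT_C$ of $C\cup\{e_C\}$, and given an arbitrary $A$-separator with last-intersecting edge $f_C$, it observes only the inclusion $e_C\cap V_C\subseteq f_C\cap V_C$ (because $e_C$ must precede $f_C$ in the new separator) and then replaces the root $e_C$ of $\calT_C$ by $f_C$, using that the children of $e_C$ in $\calT_C$ contain only vertices of $V_C$. You instead establish the sharper equality $a_{l_C}\cap V_C = W_C$ for \emph{every} $A$-separator, which makes it transparent that strongness is really a property of the pair $(A,C)$ rather than of any particular ordering; no decomposition ever needs to be modified. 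One small caveat worth recording: later in the paper the hypergraph $C'$ is taken with vertex set $V_C\cup a_{l_C}$ and the full edge $a_{l_C}$, so different separators can yield literally different hypergraphs $C'$; but since the vertices of $a_{l_C}\setminus V_C$ occur only in the root edge, db-rootability in $a_{l_C}$ is equivalent to db-rootability of $(V_C,E_C\cup\{W_C\})$ in $W_C$, and your conclusion stands under either reading.
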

\begin{proof}
Let $\calP$ be an $A$-separator and $C$ a connected component of $\calH\setminus A$. Let $e_C$ be the edge of the existing strong separator in which $C \cup \{e_C\}$ is db-rootable. Let $\calT_C$ be the corresponding disjoint branches decoposition rooted in $e_C$ and let $f_C$ be the last edge on $\calP$ such that $f_C \cap V_C \neq \emptyset$. As $e_C \cap V_C \neq \emptyset$, $e_C$ comes before $f_C$ on $\calP$. As $\calP$ is an $A$-separator, $e_C \cap V_C \subseteq f_C \cap V_C$. As only vertices of $V_C$ appear in the children of $e_C$ in $\calT_C$, we can replace the root $e_C$ of $\calT_C$ without breaking the connectedness condition of the vertices of $V_C$ or the disjointness of the branches (we do not change the branches of $\calT_C$). Doing this for all components of $\calH\setminus C$ shows that $\calP$ is strong.
\end{proof}

We will now show the main theorem that directly yields the algorithm for the computation of disjoint branches decomposition by reducing the construction of a disjoint branches decomposition to the construction of strong separators.

\begin{theorem}
\label{thm:correction}
Let $\calH = (V,E)$ be a hypergraph, $e \in E$, and $\calH' = (V',E') = \calH \setminus e$. Let furthermore $A_e:= \{e'\in E' \mid e\cap V' \subseteq e'\}$. Assume that $\calH'$ has only one connected component. Then $\calH$ is db-rootable in $e$ if and only if there exists a strong $A_e$-separator of $\calH'$.
\end{theorem}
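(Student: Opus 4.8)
The plan is to prove both directions of the equivalence by translating back and forth between a disjoint branches decomposition of $\calH$ rooted in $e$ and a strong $A_e$-separator of $\calH'$. The central observation is that $A_e$ consists exactly of those edges of $\calH'$ that contain all of $e \cap V'$, so by Lemma~\ref{lem:cover} these are the candidate edges that can sit directly below $e$ along a branch, and by the connectedness condition the vertices of $e \cap V'$ must occupy a connected ``top segment'' of the decomposition.

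\emph{Forward direction ($\calH$ db-rootable in $e$ $\Rightarrow$ strong $A_e$-separator exists).} First I would take a disjoint branches decomposition $\calT$ of $\calH$ rooted in $e$. Since $\calH'$ is connected, Lemma~\ref{lem:cc} forces $e$ to have a single child, and more generally the edges containing a fixed vertex of $e \cap V'$ form a connected subtree hanging off the root; the disjointness of branches then forces all the edges of $A_e$ to lie along one single root-to-some-node path of $\calT$. This path, read from the root downward and with $e$ removed, is the candidate $A_e$-separator $\calP = a_1 \ldots a_m$. I would then verify the separator property of Definition~\ref{def:separator}: for a component $C$ of $\calH' \setminus A_e$, any edge of $C$ attaches to the path $\calP$ at a unique deepest node (by disjointness of branches, $C$ cannot straddle two different branches off the spine), and the connectedness condition for each vertex of $V_C$ then forces the nested-inclusion condition $a_i \cap V_C \subseteq a_j \cap V_C$ for $i \le j$ up to the attachment point $a_{l_C}$. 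Finally, the subtree of $\calT$ rooted at the attachment node $a_{l_C}$, together with the edges of $C$, is itself a disjoint branches decomposition witnessing that $C' = (V_C, E_C \cup \{a_{l_C}\})$ is db-rootable in $a_{l_C}$; this is exactly the strongness condition, so $\calP$ is a strong $A_e$-separator.

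\emph{Backward direction (strong $A_e$-separator exists $\Rightarrow$ $\calH$ db-rootable in $e$).} Here I would assemble a decomposition by hand. Given a strong $A_e$-separator $\calP = a_1 \ldots a_m$, I build the spine as the path $e, a_1, a_2, \ldots, a_m$ (placing $e$ on top), and for each component $C$ of $\calH' \setminus A_e$ I attach, below the node $a_{l_C}$, the disjoint branches decomposition of $C'$ rooted in $a_{l_C}$ guaranteed by strongness. I then check that the resulting tree is a genuine disjoint branches decomposition: the connectedness condition for vertices of $e \cap V'$ holds because by definition of $A_e$ every edge $a_i$ contains $e \cap V'$ so these vertices occupy the whole spine contiguously; the connectedness condition for a vertex $v \in V_C$ holds because the separator's nested-inclusion condition places $v$ contiguously along the initial spine segment $a_1 \ldots a_{l_C}$ and then connectedness inside the attached decomposition of $C'$ takes over; and disjointness of branches holds because distinct components $C$ live in vertex-disjoint subtrees hanging off distinct (or nested) spine nodes, with the only shared vertices being those on the spine itself, which lie on a common branch.

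The main obstacle I expect is the \emph{backward} direction, specifically the bookkeeping that glues the per-component decompositions onto the spine while simultaneously maintaining \emph{both} the global connectedness condition and branch-disjointness. The delicate point is verifying that a vertex $v$ shared between the spine and a component $C$ (i.e.\ $v \in a_i \cap V_C$ for some $i \le l_C$) really does form a connected subtree: one must use that the separator condition makes the $a_i \cap V_C$ nested, so $v$ appears on a contiguous suffix $a_{i_0}, \ldots, a_{l_C}$ of the initial spine and then continues connectedly into the subtree of $C'$ rooted at $a_{l_C}$ (using that $v \in a_{l_C}$, the root of that subdecomposition). Handling vertices that lie on the spine itself but belong to no component, and ruling out accidental sharing of a vertex between two different attached subtrees, is where the argument needs the full strength of Definition~\ref{def:separator} rather than just the existence of a join path.
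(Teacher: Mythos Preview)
Your proposal is correct and follows essentially the same approach as the paper's proof: in the forward direction you extract the spine $A_e$ from the decomposition, argue it forms a path by disjointness and connectedness, and verify the separator and strongness conditions component by component; in the backward direction you rebuild the decomposition by attaching each $\calT_C$ to the spine at $a_{l_C}$ and check connectedness and disjointness exactly as the paper does. The only small gaps are that you do not explicitly argue $A_e$ is a \emph{contiguous} segment of the branch (the paper uses connectedness of each vertex of $e\cap V'$ to show any edge between two members of $A_e$ is itself in $A_e$), and your case split for connectedness in the backward direction omits the easy case of a vertex lying only in edges of $A_e$ but not in $e\cap V'$; both are routine to fill in.
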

\begin{proof}
Suppose $\calH$ is db-rootable in $e$. Let $\calT$ be a disjoint branches decomposition of $\calH$ rooted in $e$. By Lemma~\ref{lem:cover}, $A_e$ is not empty. Let $v \in e \cap V'$. We know that $v$ is contained in all edges $e'\in A_e$. Thus, by disjointness, the edges in $A_e$ are on the same branch of $\calT$. Moreover, we claim that $A_e$ is connected in $\calT$. To see this, suppose that $b \in E$ is between $a,c \in A_e$ on this branch. Then by connectedness, $e \cap V' \subseteq b$, so $b \in A_e$. Consequently, $A_e$ is connected and thus forms a path. Let $\calP = a_1 \ldots a_k$ be this path in $\calT$ in the direction from the root to the leaves of the tree. We claim that $\calP$ is a strong $A_e$-separator.

To this end, let $C = (V_C,E_C)$ be a connected component of $\calH' \setminus A_e$. We consider the forest obtained by removing $A_e$ in $\calT$. By Lemma~\ref{lem:cc}, vertices in different trees of this forest are in different connected component of $\calH' \setminus A_e$ as well. Thus there is a tree $\calT_C$ that contains all the edges in $E_C$. Let $a_{l_C}$ be the edge of $A_e$ to which the root of $\calT_C$ is connected. If $j > l_C$, then we claim that $a_j \cap V_C = \emptyset$. Assume this were not the case, then $\calT_C$ and the subtree of $a_{l_C}$ containing $a_j$ were not disjoint which is a contradiction to $\calT$ being a disjoint branches decomposition. Thus $a_j\cap V_C=\emptyset$. Now consider $i < j \leq l_C$ and let $v \in a_i \cap V_C$. We have $v \in a_{l_C} \cap V_C$ by the connectedness condition since $v$ appears in $\calT_C$ and, again by connectedness the connectedness condition, $v \in a_j \cap V_C$. Thus, $a_i \cap V_C \subseteq a_j \cap V_C$. It follows that $\calP$ is an $A_e$-separator. As $C \cup \{a_{l_C}\}$ is db-rootable in $a_{l_C}$ using the tree $\calT_C$, we have that $\calP$ is a strong $A_e$-separator.

Assume now that there is a strong $A_e$-separator $\calP = a_1 \ldots a_k$. For a connected component $C$ of $\calH' \setminus A_e$, let $l_C := \max \{ i \mid a_i \cap V_C \neq \emptyset\}$. By definition, there exists a disjoint branches decomposition $\calT_C$ for $C \cup \{a_{l_C}\}$ rooted in $a_{l_C}$. We construct a disjoint branches decomposition for $\calH$ as follows:
\begin{itemize}
\item We root the path $e a_1 \ldots a_k$ in $e$.
\item For each connected component $C$ of $\calH' \setminus A_e$, we connect the root of $\calT_C$ to $a_{l_C}$.
\end{itemize}

We claim that the resulting tree $\calT$ is a disjoint branches decomposition. We first show that the branches of $\calT$ are disjoint. Indeed, if $a,b$ are edges in two different branches then two cases can occur: Either $a$ and $b$ are in two different connected component of $\calH' \setminus A_e$. But then they are disjoint, because they lie in different components of $\calH$ by construction. Otherwise, let $a$ be in a connected component $C$ of $\calH' \setminus A_e$ and let $b$ be in $A_e$. But as $b$ is on a different branch as $a$, it follows that $b$ comes after $a_{l_C}$ on $\calP$. Thus $b \cap V_C = \emptyset$ and it follows that $a \cap b = \emptyset$ since $a \subseteq V_C$. Thus the branches of $\calT$ are disjoint.

Now we show that $\calT$ is a join tree, i.e., it satisfies  the connectedness condition for all vertices. For a connected component $C$ of $\calH' \setminus A_e$, for a vertex $v \in V_C$, its connectedness is ensured along $\calP$ since $\calP$ is a join path and in $\calT_C$ since $\calT_C$ is a join tree. Furthermore, by construction $v\in e_{l_dC}$ so the edges containing $v$ are connected in $\calT$. For a vertex $v$, which does not appear in any $V_C$, that is, which only appears in $A_e$, its connectedness is ensured by the fact that $\calP$ is a join path for $A_e$.
\end{proof}

We now turn the proof of Theorem~\ref{thm:correction} into the algorithm for the computation of disjoint branches decompositions.

\begin{corollary}
  \label{cor:computedb}
  There is a polynomial time algorithm \texttt{\em ComputeDB}$(\calH,e)$ that, given a hypergraph $\calH = (V,E)$ and and edge $e \in E$, returns a disjoint branches decomposition of $\calH$ rooted in $e$ if it exists and rejects otherwise.
\end{corollary}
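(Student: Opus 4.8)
The plan is to turn the equivalences of Observation~\ref{obs:component} and Theorem~\ref{thm:correction} into a recursive procedure and then argue that the recursion unfolds in polynomially many steps. Given $\calH$ and $e$, I would first compute the connected components $C_1,\dots,C_p$ of $\calH\setminus e$. By Observation~\ref{obs:component}, it suffices to compute, for each $i$, a disjoint branches decomposition of $C_i':=(V_{C_i}\cup e, E_{C_i}\cup\{e\})$ rooted in $e$; if any of these fails the whole instance is rejected, and otherwise the returned decompositions are glued together at their common root $e$. Since $C_i'\setminus e = C_i$ is connected, each such recursive call falls into the single-component case, so after this first step everything reduces to hypergraphs whose deletion of the root is connected.

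In the single-component case I would apply Theorem~\ref{thm:correction}: I compute $A_e=\{e'\in E'\mid e\cap V'\subseteq e'\}$, rejecting immediately if $A_e=\emptyset$ (legitimate by Lemma~\ref{lem:cover}), and call $\texttt{ComputeSeparator}(\calH\setminus e, A_e)$ of Theorem~\ref{thm:computeseparator} to obtain an $A_e$-separator $\calP=a_1\ldots a_k$ or a rejection. For each connected component $C$ of $(\calH\setminus e)\setminus A_e$ I set $l_C:=\max\{i\mid a_i\cap V_C\neq\emptyset\}$ and recursively call the procedure on $C\cup\{a_{l_C}\}$ rooted in $a_{l_C}$. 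The decomposition is then assembled exactly as in the ``if'' direction of the proof of Theorem~\ref{thm:correction}: root the path $e\,a_1\ldots a_k$ in $e$ and attach each returned $\calT_C$ below $a_{l_C}$.

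Correctness is then immediate from the cited results, but two points rely on the structural lemmas. First, $\texttt{ComputeSeparator}$ returns an \emph{arbitrary} $A_e$-separator, whereas Theorem~\ref{thm:correction} requires a \emph{strong} one; here Proposition~\ref{prop:strongall} saves us, since it guarantees that if a strong $A_e$-separator exists at all then \emph{every} $A_e$-separator is strong. Hence the separator we compute is strong precisely when the recursive calls on all the $C\cup\{a_{l_C}\}$ succeed, and these calls do check exactly the strongness condition, since $C\cup\{a_{l_C}\}$ being db-rootable in $a_{l_C}$ is by definition what makes $\calP$ strong on the component $C$.

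The main obstacle is bounding the running time, since the roots $a_{l_C}$ are reused and so the total number of edges across the subproblems can a priori grow. I would control this with a potential argument using $\phi(\calH,e):=|E|-1$, the number of non-root edges. At a component split this potential is exactly partitioned among the children, as child $C_i'$ contributes $|E_{C_i}|$ and $\sum_i|E_{C_i}|=|E|-1$; at a separator step it strictly drops by $|A_e|\geq 1$, since the edges of $A_e$ become roots and disappear from the subproblems $C\cup\{a_{l_C}\}$ (whose non-root edges sum to $|E'|-|A_e|$). Hence along every branch the potential is nonincreasing, and the total amount destroyed over all single-component calls equals the initial potential $|E|-1$; as each such call destroys at least one unit, there are at most $|E|-1$ of them, and in particular at most $|E|-1$ terminal (leaf) calls. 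A standard count of a rooted tree then bounds the number of branching nodes by the number of leaves and the number of unary nodes by the number of single-component calls, so the recursion tree has only $O(|E|)$ nodes. Since each node performs a polynomial amount of work (computing components, one call to $\texttt{ComputeSeparator}$, and the gluing), the overall algorithm runs in polynomial time.
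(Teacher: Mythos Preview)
Your proposal is correct and follows essentially the same approach as the paper: both turn Theorem~\ref{thm:correction} into a recursive procedure via Observation~\ref{obs:component} and Lemma~\ref{lem:cover}, calling \texttt{ComputeSeparator} and recursing on the components $C\cup\{a_{l_C}\}$. You are in fact more explicit than the paper on two points---you invoke Proposition~\ref{prop:strongall} to justify that the \emph{arbitrary} separator returned by \texttt{ComputeSeparator} suffices, and you give a careful potential argument for the $O(|E|)$ bound on recursive calls where the paper simply asserts ``at most $|E|$ recursive calls''---but the underlying algorithm and its analysis are the same.
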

\begin{proof}
We give a pseudo-code of \texttt{ComputeDB} in Algorithm~\ref{alg:computedb}. Following Observation~\ref{obs:component}, we deal with the connected component of $\calH \setminus e$ independently. For each connected component of $\calH$, we compute $A_e$. We reject if it is empty since it means by Lemma~\ref{lem:cover} that $\calH$ is not db-rootable in $e$. Otherwise we compute an $A_e$-separator $\calP = a_1 \ldots a_k$ of $\calH$ in polynomial time with the procedure \texttt{ComputeSeparator} of Theorem~\ref{thm:computeseparator} and recursively check its strongness by trying to compute a disjoint branches decomposition $\calT_C$ of $C \cup \{a_{l_C}\}$ rooted in $a_{l_C}$ where $l_C = \max \{i \mid a_i \cap V_C \neq \emptyset \}$. Following the proof of Theorem~\ref{thm:correction}, we construct a disjoint branches decomposition of $\calH$ rooted in $e$ by rooting the path $e a_1 \ldots a_k$ in $e$ and connecting the root of $\calT_C$ to $a_{l_C}$.

Theorem~\ref{thm:correction} ensures both that if the algorithm does not reject then the computed decomposition is a disjoint branches decomposition of $\calH$ rooted in $e$ and that if such a decomposition exists then the algorithm won't reject. 

\begin{algorithm}
\caption{The algorithm \texttt{ComputeDB}}
\begin{algorithmic}
  \State {\texttt{ComputeDB}}($\calH = (V,E)$, $e$) =
  \If {$|E| = 1$} 
  \Return the tree with the only vertex $e$
  \Else
  \For{each connected component $\calH_i = (V_i,E_i)$ of $\calH \setminus e$}
  \State{$A_e \leftarrow \{ e' \in E_i \mid e \cap V_i \subseteq e' \}$}
  \If {$A_e = \emptyset$} \State{Fail.} \EndIf
  \State{$\calP \leftarrow$ \texttt{ComputeSeparator}$(\calH_i,A_e)$}
  \State{$\calT_i \leftarrow \calP$}
  \For{each connected component $C = (V_C,E_C)$ of $\calH_i \setminus A_e$}
  \State{$l_C \leftarrow \max \{j \mid a_j \cap V_C \neq \emptyset \}$ (where $\calP = a_1 \ldots a_k$})
  \State{$C' \leftarrow (V_C \cup a_{l_C}, E_C \cup \{a_{l_C}\})$}
  \State{$\calT_C \leftarrow \texttt{ComputeDB}(C',a_{l_C})$}
  \State{connect $\calT_C$ to $\calT_i$ in $a_{l_C}$}
  \EndFor
  \EndFor

  \Return the tree rooted in $e$ having $\calT_1,\ldots,\calT_p$ as children
  \EndIf
\end{algorithmic}
\label{alg:computedb}
\caption{The algorithm \texttt{ComputeDB} of Corollary~\ref{cor:computedb}.}
\end{algorithm}                 

The algorithm runs in polynomial time since there is at most $|E|$ recursive calls of \texttt{ComputeDB} and that each call is in polymomial time. 
\end{proof}

\section{Some negative results on generalizations}\label{sct:extensions}

In this section we will discuss several approaches to generalizing the counting algorithm for hypergraphs with disjoint branches to more general classes of hypergraphs. Unfortunately, all these results will be negative as we will show hardness results for all extensions we consider. We still feel these results are worthwhile because they might help in guiding future research to classes of hypergraphs that are better suited for $\sSAT$.

Our main technical tool will be a helpful result by Samer and Szeider \cite{Samer:2010wc}. For a hypergraph $\calH=(V,E)$, let $\calH^*$ be the hypergraph $\calH^*:=(V\cup \{x\} , E\cup \{V\cup \{x\})$ where $x$ is a new vertex. We use the following result of Samer and Szeider.

\begin{lemma}[\cite{Samer:2010wc}]\label{lem:SamerS10}
 Let $F$ be a CNF-formula with hypergraph $\calH$. Then we can in polynomial time construct a CNF-formula $F'$ with hypergraph $\calH^*$ such that 
 \begin{itemize}
  \item $F$ has a satisfying assignment if and only if $F'$ has one, and
  \item the number of satisfying assignments of $F$ can be computed from those of $F'$ in polynomial time.
 \end{itemize}
\end{lemma}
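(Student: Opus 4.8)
The plan is to obtain $F'$ from $F$ by introducing a single fresh variable $x$ together with exactly one new clause whose variable set is all of $V$ together with $x$; this is the only way to realize the extra edge $V\cup\{x\}$ of $\calH^*$ while leaving the edge set $E$ untouched. Concretely, I would set
\[
  C := x \vee \bigvee_{v \in V} \overline{v}, \qquad F' := F \wedge C .
\]
Since $C$ mentions every variable of $V$ and the new variable $x$, its associated edge is exactly $V\cup\{x\}$, while the clauses inherited from $F$ contribute precisely the edges in $E=\{\var{c}\mid c\text{ a clause of }F\}$; as $x\notin V$ is fresh, no edge of $E$ is enlarged. Hence the hypergraph of $F'$ is $(V\cup\{x\},\,E\cup\{V\cup\{x\}\})=\calH^*$, and the whole construction is clearly polynomial.

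For the satisfiability equivalence, I would note that every model of $F'$ restricts to a model of $F$, so $F'$ satisfiable implies $F$ satisfiable; conversely, given any model $a$ of $F$, the extension sending $x\mapsto 1$ satisfies $C$ and hence $F'$, so $F$ satisfiable implies $F'$ satisfiable. This settles the first item.

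The heart of the argument is the counting relation, and here I would carefully track how many ways each model of $F$ extends to $V\cup\{x\}$. The clause $C$ is falsified by exactly one assignment of $V\cup\{x\}$, namely the one sending every $v\in V$ to $1$ and $x$ to $0$; every other assignment satisfies $C$. Consequently, a model $a$ of $F$ that is not the all-ones assignment $\mathbf{1}$ extends to two models of $F'$ (both values of $x$ are allowed), whereas if $\mathbf{1}$ itself is a model of $F$ it extends to only one (namely $x\mapsto 1$). Writing $\epsilon:=1$ if $\mathbf{1}\models F$ and $\epsilon:=0$ otherwise, I obtain
\[
  \#F' \;=\; 2(\#F-\epsilon)+\epsilon \;=\; 2\,\#F-\epsilon .
\]
Since $\epsilon$ is computable in polynomial time by simply evaluating $F$ on the single assignment $\mathbf{1}$, one recovers $\#F=(\#F'+\epsilon)/2$, which gives the second item.

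The only place demanding care is this last step: one must isolate the unique assignment penalized by the new clause and detect whether it lies in the model set of $F$, so that the two-to-one correspondence can be inverted exactly. (If one is willing to admit a tautological clause, then taking instead $C:=x\vee\overline{x}\vee\bigvee_{v\in V}v$ makes every model of $F$ extend in exactly two ways and yields the clean identity $\#F=\#F'/2$ with no correction term; I would mention this as an alternative but prefer the non-tautological clause above so as to stay within the usual syntactic conventions for CNF-formulas.)
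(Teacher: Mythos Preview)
Your proof is correct. The paper does not actually prove this lemma itself: immediately after stating it, the authors simply remark that it ``is not proved explicitly in [Samer--Szeider] but follows from the proof of Proposition~1 in that paper.'' Your construction---adjoining the single clause $x\vee\bigvee_{v\in V}\overline v$---is precisely the standard reduction underlying that proposition, and your counting analysis (including the $\epsilon$ correction for the all-ones assignment and the clean tautological alternative) is accurate. So there is nothing in the present paper to compare against; your argument supplies the details the authors chose to leave to the citation.
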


Note that Lemma \ref{lem:SamerS10} is not proved explicitly in \cite{Samer:2010wc} but follows from the proof of Proposition 1 in that paper.

\subsection{Undirected paths}

In this section we consider a generalization of disjoint branches in the following way: We call a hypergraph $\calH$ \emph{undirected path acyclic} if there is a join tree $(\calT, \lambda)$ of $\calH$ such that for every $v\in V$ the edge set $\{e\in E\mid v\in e\}$ forms an undirected path in $\calT$. Undirected path acyclicity is a seemingly natural generalization of disjoint branches acyclicity by allowing undirected paths instead of directed paths. Unfortunately, as we will see this generalization makes our counting problem hard and in fact even $\SAT$ hard. To show this we will use the following notions: We call a CNF-formula \emph{read-twice} if every variable appears at most twice in it. 

\begin{lemma}[\cite{HuntS90}]\label{lem:hard}
 $\SAT$ for read-twice-formulas is $\NP$-hard. 
\end{lemma}

We show that undirected path acyclicity does not not allow tractable $\SAT$ and thus it is not a good generalization in our setting.

\begin{theorem}\label{thm:hard}
 $\SAT$ on undirected path acyclic CNF-formulas is $\NP$-complete. 
\end{theorem}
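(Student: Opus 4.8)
The plan is to establish $\NP$-completeness in the usual two parts. Membership in $\NP$ is immediate, since undirected path acyclic formulas are ordinary CNF-formulas and $\SAT \in \NP$. The work lies in hardness, which I would obtain by a polynomial-time reduction from $\SAT$ for read-twice formulas, which is $\NP$-hard by Lemma~\ref{lem:hard}.

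First I would record the combinatorial observation driving the reduction: if $F$ is read-twice then every variable occurs in at most two clauses, so in its hypergraph $\calH=(V,E)$ every vertex lies in at most two hyperedges. The difficulty is that such an $\calH$ need not even be $\alpha$-acyclic (for instance the triangle of clauses $\{a,b\}$, $\{b,c\}$, $\{c,a\}$ is read-twice but has no join tree), so one cannot directly claim that $\calH$ is undirected path acyclic. To repair this I would apply Lemma~\ref{lem:SamerS10} to $F$, obtaining in polynomial time a formula $F'$ whose hypergraph is $\calH^* = (V\cup\{x\}, E\cup\{V\cup\{x\}\})$ and which is satisfiable if and only if $F$ is. Since satisfiability is preserved, it suffices to show that $\calH^*$ is undirected path acyclic; then $F'$ is an instance of $\SAT$ on undirected path acyclic formulas and $F \mapsto F'$ is the desired reduction.

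The heart of the argument is exhibiting the required join tree for $\calH^*$. I would take the star $\calT$ whose center is the node labelled by the universal edge $e_0 := V\cup\{x\}$ and whose leaves are the nodes labelled by the original edges $e_1,\dots,e_m$ of $E$, with $\lambda$ the obvious bijection. This is plainly a tree with exactly one node per edge. To check the connectedness condition and the path condition at once, fix a vertex $v$. If $v = x$ then $v$ lies only in $e_0$, so the corresponding node set is the single node $\{e_0\}$, a trivial path. If $v \in V$ then $v$ lies in $e_0$ together with the at most two original edges containing it, say $e_i$ and possibly $e_j$; in the star these nodes induce the subtree $e_i - e_0 - e_j$ (or $e_i - e_0$, or just $e_0$), which is connected and is an undirected path of length at most two. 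Hence $\calT$ is a join tree and for every vertex the edge set forms an undirected path, so $\calH^*$ is undirected path acyclic.

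I expect no serious obstacle beyond spotting that the universal edge supplied by Lemma~\ref{lem:SamerS10} can serve as the center of a star in which every vertex sees at most two spokes, forcing its edge set to be a path of length at most two; the verification above is then routine. Combining everything, $F \mapsto F'$ is a polynomial-time reduction from read-twice $\SAT$ to $\SAT$ on undirected path acyclic formulas, so the latter is $\NP$-hard, and with membership in $\NP$ it is $\NP$-complete.
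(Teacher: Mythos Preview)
Your proof is correct and follows essentially the same approach as the paper: reduce from read-twice $\SAT$ via Lemma~\ref{lem:SamerS10}, and observe that the star with the universal edge $V\cup\{x\}$ at the center is a join tree in which every vertex lies on an undirected path of length at most two. Your write-up is somewhat more detailed (you make $\NP$-membership explicit and note why read-twice hypergraphs need not themselves be $\alpha$-acyclic), but the argument is the same.
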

\begin{proof}
 Let $F$ be a read-twice-formula. We construct $F'$ as in Lemma \ref{lem:SamerS10} and claim that $F'$ is undirected path acyclic. The corresponding join tree has $V\cup \{x\}$ as its root and all other edges are leaves of the join tree that are connected to the root. Then every variable appears in at most $2$ leaves because $F$ is read-twice. Since all variables also appear in the root, every variable induces a path in this join tree. So $F'$ is undirected path acyclic. 
 
 The claim is now easy to see: $\SAT$ for $F$ and $F'$ is equivalent, but with Lemma \ref{lem:hard} $\SAT$ for $F$ is hard. This gives the desired reduction for $\NP$-hardness.
\end{proof}

\subsection{Allowing limited intersections and appearance in several branches}

A natural way of generalizing disjoint branches decompositions is allowing limited intersections between branches. We will show that this approach leads to hard counting problems even if we only allow the intersection to contain one variable.

\begin{lemma}\label{lem:boundedintersections}
 $\sSAT$ is $\sP$-hard for CNF-formulas that have a join tree in which the branches may may have a pairwise intersection containing on variable.
\end{lemma}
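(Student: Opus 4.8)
The plan is to prove $\sP$-hardness by a reduction from $\sSAT$ itself (the canonical $\sP$-hard problem), using the construction of Lemma~\ref{lem:SamerS10} to force a join tree of the required shape. Given an arbitrary CNF-formula $F$ with hypergraph $\calH=(V,E)$, I would first apply Lemma~\ref{lem:SamerS10} to obtain a formula $F'$ with hypergraph $\calH^*=(V\cup\{x\}, E\cup\{V\cup\{x\}\})$ such that the number of satisfying assignments of $F$ can be recovered in polynomial time from those of $F'$. The key point is that $\calH^*$ contains one big edge $V\cup\{x\}$ that covers all original variables, so I can build a join tree whose root is this big edge and whose leaves are the original edges of $E$, each connected directly to the root.

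The heart of the argument is verifying that this star-shaped join tree satisfies the claimed relaxed disjointness condition, namely that any two branches share at most one variable. In the star, each branch from the root consists of the root edge $V\cup\{x\}$ together with one leaf $e\in E$. Two distinct branches corresponding to leaves $e$ and $f$ meet only at the root, so the relevant intersection to control is $\lambda(e)\cap\lambda(f) = e\cap f$ for the leaf edges $e,f$. The difficulty is that for an \emph{arbitrary} input formula these original edges can share arbitrarily many variables, so the naive star does not have the one-variable intersection property. The main obstacle is therefore engineering the reduction so that distinct branches of the join tree intersect in at most a single variable while preserving the counting information.

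To overcome this I would start the reduction not from arbitrary $\sSAT$ but from a suitably restricted $\sP$-hard variant whose hypergraph structure already guarantees small pairwise edge intersections — for instance $\stwoSAT$ (counting satisfying assignments of 2-CNF formulas), which is $\sP$-hard by~\cite{Roth96}. For a 2-CNF formula every original clause involves only two variables, so distinct original edges $e,f\in E$ satisfy $|e\cap f|\le 2$; more care (e.g. working with a monotone or otherwise structured $\sP$-hard 2-CNF family, or subdividing the few clauses that share a variable) lets me ensure $|e\cap f|\le 1$ for the leaf edges. After applying Lemma~\ref{lem:SamerS10} to such an $F$, the star join tree of $F'$ has the big root edge on every branch and a single 2-variable leaf on each branch; two branches meet at the root and their leaf edges intersect in at most one variable, so each pair of branches of the join tree has pairwise intersection containing at most one variable, exactly as required.

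Finally I would check the connectedness condition and conclude. Each variable $v\in V$ appears in the root edge $V\cup\{x\}$ and in some leaves; since the leaves are all adjacent to the root, the set $\{t : v\in\lambda(t)\}$ is connected in this star, so $(\calT,\lambda)$ is a legitimate join tree. The new vertex $x$ appears only in the root. Hence $F'$ lies in the claimed class, and since $\sSAT$ for the starting 2-CNF family is $\sP$-hard while the count for $F$ is polynomial-time computable from that of $F'$ by Lemma~\ref{lem:SamerS10}, $\sSAT$ is $\sP$-hard on this class. The one delicate step to get right is the bookkeeping that keeps all pairwise leaf intersections down to a single variable; everything else is a routine verification of the join-tree axioms.
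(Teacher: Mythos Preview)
Your proposal is correct and follows essentially the same route as the paper: reduce from a $\sP$-hard 2-CNF counting problem, apply Lemma~\ref{lem:SamerS10} to add the covering edge, and take the star-shaped join tree rooted at that edge. The paper simply commits immediately to monotone 2-CNF (i.e.\ $\mathrm{\#VertexCover}$), which is exactly the ``monotone \ldots\ 2-CNF family'' you list as one option; for monotone 2-CNF with no repeated clauses, distinct clauses automatically share at most one variable, so your ``more care'' step becomes a one-line observation rather than something requiring subdivision tricks.
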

\begin{proof}
 We reduce from $\mathrm{\#VertexCover}$ which can alternatively be interpreted as $\stwoSAT$ on monotone formulas and is well known to be $\sP$-hard. So let $F$ be a monotone $2$-CNF formula. We construct $F'$ as in Lemma \ref{lem:SamerS10}. Let $(\calT, \lambda)$ be a join tree of $F'$ in which the edge $\var{F'}$ is the root and all other edges are leaves. Since $F$ is monotone and we may assume that no clause appears twice in it, the leaves intersect in at most one variable which completes the proof.
\end{proof}

\section{Conclusion}

We have presented a new structural class of tractable $\sSAT$ instances, those whose hypergraphs admit a disjoint branches decomposition. To this end, we also invested a considerable amount of work into an algorithm that computes the decompositions. 

Several questions remain, the most obvious open problem certainly being the complexity of $\sSAT$ on $\beta$-acyclic hypergraphs. Can one show a $\sP$-completeness result or a polynomial time algorithm for this case?

Another aim for future work is trying to turn the disjoint branches property into a hypergraph width measure such that $\sSAT$---or even $\SAT$---for the hypergraphs for which this width measure is bounded is tractable? Can we construct this measure to even allow fixed-parameter tractability? Note that it is known that the parameterization by incidence cliquewidth both does not allow fixed-parameter tractability~\cite{OrdyniakPS13}.

More generally, we feel that it is very desirable to understand the tractability frontier for $\SAT$ and $\sSAT$ with respect to structural restrictions better overall. Is there a width measure that generalizes both hypergraphs with disjoint branches and incidence cliquewidth that leads to tractable $\sSAT$? A natural candidate would be $\beta$-hypertree width (see Figure~\ref{FIG:hierarchy}). Are there other classes of hypergraphs incomparable to those studied so far that give large structural classes of tractable $\sSAT$-instances? Note that there is a similar line in the area of constraint satisfation (see e.g.~\cite{miklos08} for an overview) that has been very successful but unfortunately does not apply directly.

\bibliographystyle{alpha}
\bibliography{bibfile}

\end{document}